\documentclass{style/notices}
\usepackage{amsfonts,amssymb,amsmath,amscd,amsthm}
\usepackage{graphicx}
\usepackage{hyperref}
\hypersetup{pdfborder = 0 0 0}
\usepackage{cleveref}
\usepackage{libertinus}
\usepackage[colorinlistoftodos]{todonotes}
\usepackage{float}

\title{
    Measuring Diversity and Segregation with Convex Functions
}

\author{
  Philip S. Chodrow
  \affil{
    Phil is an assistant professor of computer science at Middlebury College. His email address is \tt{pchodrow@middlebury.edu}.
    }
}

\definecolor{lightgreen}{rgb}{0.678, 0.847, 0.678}

\theoremstyle{definition}
\newtheorem{dfn}{Definition}

\theoremstyle{plain}
\newtheorem{thm}{Theorem}[section]

\crefname{dfn}{Definition}{Definitions}
\crefname{thm}{Theorem}{Theorems}
\crefname{prop}{Proposition}{Propositions}

\theoremstyle{remark}
\newtheorem{example}{Example}[section]
\crefname{example}{Example}{Examples}

\crefname{remark}{Remark}{Remarks}

\newcommand{\N}{\mathbb{N}}
\newcommand{\R}{\mathbb{R}}

\newcommand{\simplex}[1]{\Delta_{#1}}
\newcommand{\paren}[1]{\left( #1 \right)}
\newcommand{\braces}[1]{\left\{ #1 \right\}}
\newcommand{\brackets}[1]{\left[ #1 \right]}
\newcommand{\vy}{\mathbf{y}}
\newcommand{\mX}{\mathbf{X}}
\newcommand{\mY}{\mathbf{Y}}
\newcommand{\mV}{\mathbf{V}}
\newcommand{\mU}{\mathbf{U}}
\newcommand{\cX}{\mathcal{X}}

\newcommand{\abs}[1]{\left| #1 \right|}
\newcommand{\norm}[1]{\lVert #1 \rVert}

\newcommand{\exc}[5]{\mathrm{exc}_{#1,#2, #3\leftrightarrow#4}^{(#5)}}
\DeclareMathOperator*{\argmax}{argmax}

\newcommand{\ve}{\mathbf{e}}

\newcommand{\vx}{\mathbf{x}}

\newcommand{\trace}[1]{\mathrm{trace}\;#1}

\newcommand{\vdelta}{\boldsymbol{\delta}}

\newcommand{\Pops}{\mathbb{P}}
\newcommand{\pop}{\mathcal{P}}
\newcommand{\cS}{\mathcal{S}}
\newcommand{\info}{I}

\newcommand{\divergence}[3]{D_{#1}\paren{#2 \;\|\; #3}}

\newcommand{\hess}[1]{\nabla^2 #1}

\newcommand{\vmu}{\boldsymbol{\mu}}

\newcommand{\cR}{\mathcal{R}}
\newcommand{\jacobian}[1]{\nabla #1}
\newcommand{\relint}[1]{#1^{\circ}}

\newcommand{\mC}{\mathbf{C}}

\newcommand{\standardTriple}{(\mX, \mY, \vmu)}

\newcommand{\inner}[2]{#1^\top #2}

\newcommand{\eqdef}{\triangleq}

\newcommand{\vyref}{\mathbf{y}_{\mathrm{ref}}}

\newcommand{\smooth}{\kappa}

\begin{document}
\maketitle

\section{Introduction}

We consider the problem of \emph{segregation measurement}. 
Colloquially, a population of individuals is segregated with respect to a given trait if individuals with different values of that trait tend to be separated from one another by geography, policy, or other social forces.
The segregation measurement problem is to construct compact, quantitative descriptions of the extent and structure of segregation in a given population. 
In the quantitative sociology literature, a prototypical segregation measurement is an ``index:'' a single number intended to describe the extent and structure of segregation in a population.
My focus here will be on the common case of \emph{spatial} segregation, in which individuals are assumed to be located in space (often at their place of residence) and we seek to understand how strongly space separates individuals with different demographic traits. 

To motivate this problem, as well as some of its complexities, \Cref{fig:dot-viz} visualizes the spatial distribution of racial demographics in Detroit, Michigan and Atlanta, Georgia.\footnote{
  I obtained these data for this article from the 2024 American Community Survey (ACS) 5-year estimates, which are available from the U.S. Census Bureau at \url{https://data.census.gov/}.
  I then aggregated the resulting table, which contains a total of 16 distinct ethnic and racial categories, into the five supercategories shown in each visualization.
  }
Of these two cities, we may wish to measure how \emph{diverse} or \emph{segregated} each one is, and perhaps compare them. 
To do so, we must decide how the \emph{spatial} distribution of individuals should factor into our analyses. 
We must also grapple with the \emph{modifiable areal unit problem} (MAUP) \cite{openshaw1984modifiable}: our spatial information is typically mediated by the geographic units in which our data are reported, such as counties, county subdivisions (pictured here), census tracts, or block groups; changing these boundaries would alter our segregation measurements.
Since these units are often arbitrary from an analytical point of view, we must also aim to construct measures which are robust to the choice of spatial partitioning.

\begin{figure}
  \centering
  \includegraphics[width=1\linewidth]{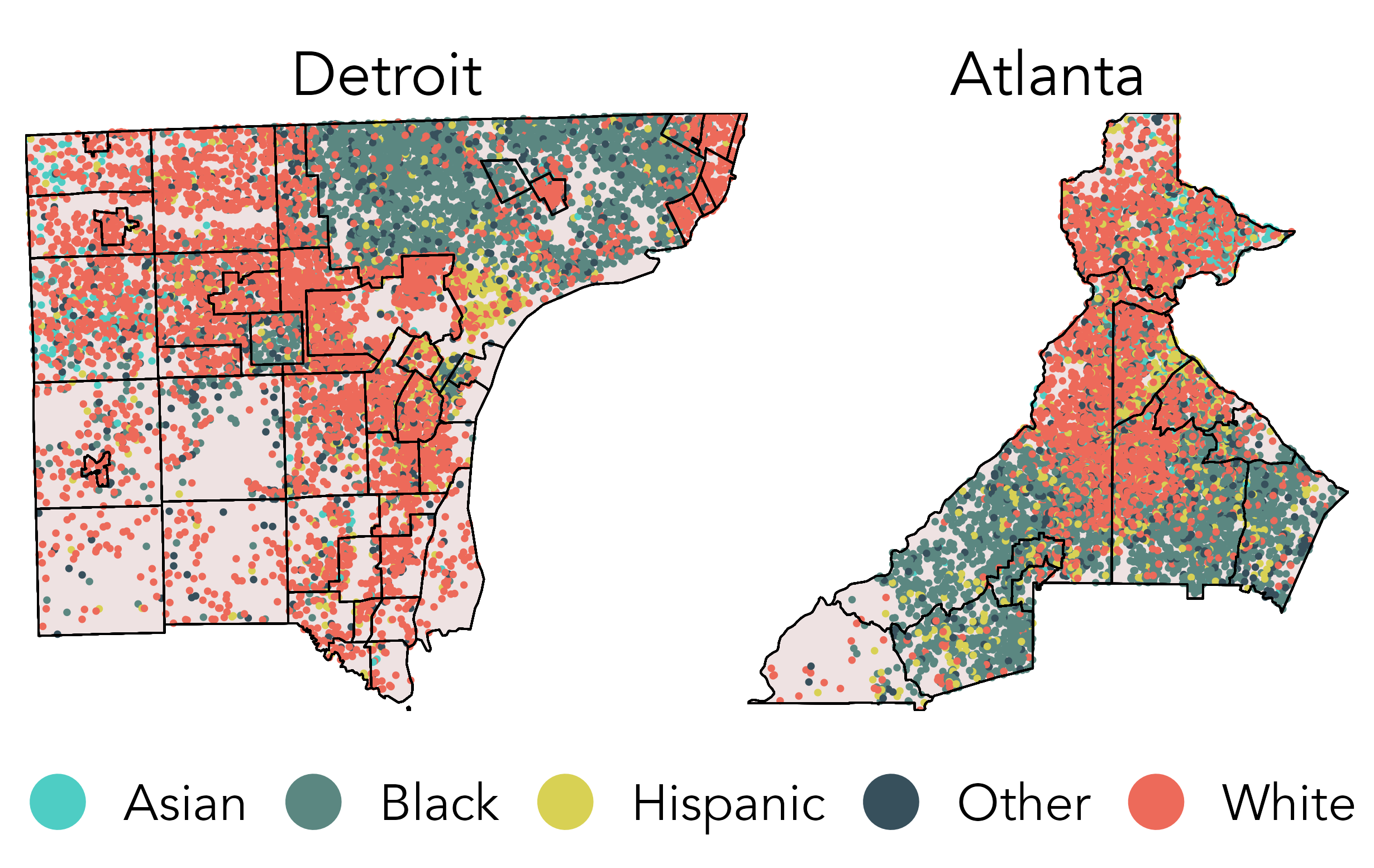}
  \caption{
    Dot visualization of race/ethnicity demographics in Detroit, Michigan and Atlanta, Georgia.
    There are 10,000 dots drawn, each of which represents an equal unit of population from the specified demographic category. 
    County subdivision boundaries are drawn in black.
  }\label{fig:dot-viz}
\end{figure}

\begin{figure*}[h]
  \centering
  \includegraphics[width=1.0\textwidth]{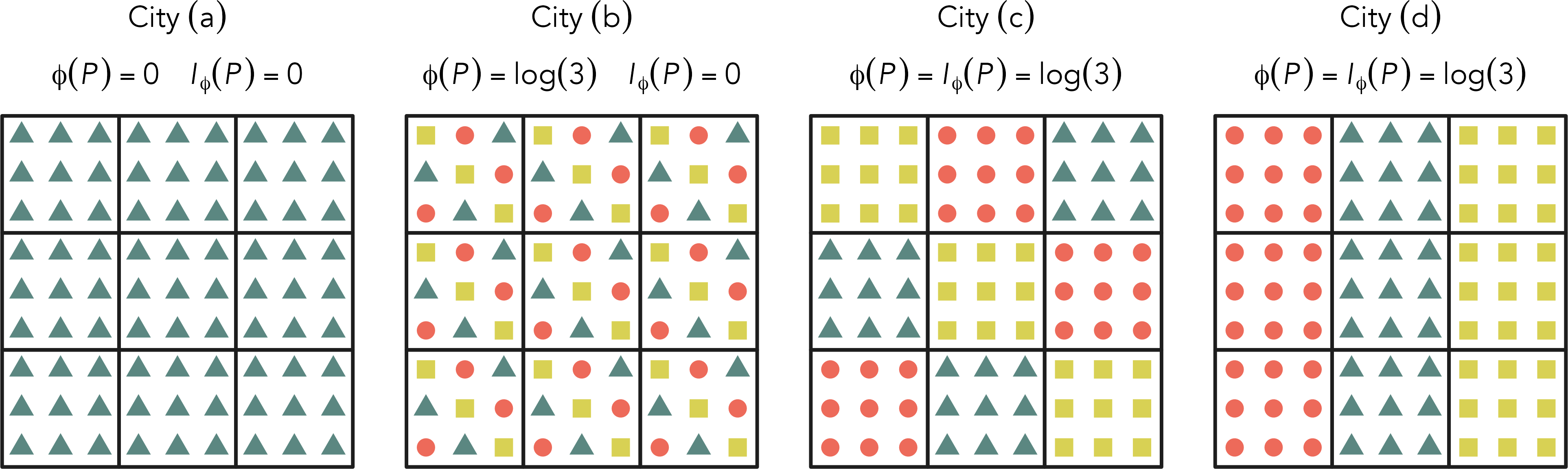}
  \caption{
    Synthetic ``cities'' with nine areal units, each of which contains a population of nine individuals in one of three demographic categories. 
    For each city, we give the Shannon entropy $\phi$ (measuring diversity) and Jensen information $I_\phi$ (measuring segregation); we define these quantities in \Cref{sec:diversity-convexity,sec:segregation}.
    } \label{fig:checkerboard}
\end{figure*}

Questions such as these have been studied for a sufficiently long time that a review article in 1955 could already express frustration with the chaotic state of the methodology. 
Commenting then on the methods available at the time, Duncan and Duncan \cite{duncanMethodologicalAnalysisSegregation1955} wrote: 
\begin{enumerate}
  \item \emph{The proposed indices of segregation have a number of hitherto unnoticed interrelationships which can be mathematically demonstrated.} 
  \item \emph{Some of them have mathematical properties of which their proponents were unaware, and which lead to difficulties of interpretation.} 
  \item \emph{As a consequence, the status of the empirical work already done with segregation indices is questionable, and their validity for further research is undermined.} 
\end{enumerate}
Since that time, there have been efforts within quantitative sociology to impose a more unified mathematical structure on segregation measures
\cite{reardonMeasuresMultigroupSegregation2002, reardonMeasuresOrdinalSegregation2009,reardon3MeasuresSpatial2004}.
There has also been some recent attention on segregation measurement in the mathematics community \cite{duchin2023measuring,topaz2025diversity,hoogstraQuantitativeMetricsTrait2025}. 
Despite the recent scholarly attention paid to segregation measurement, hints of the intellectual chaos described by Duncan and Duncan remain.
On the theme of ``\emph{hitherto unnoticed interrelationships},'' I am aware of two instances within the last ten years of a diversity or segregation measure being presented as novel despite already existing in the literature under a different name.

In this article, I present a mathematical perspective on segregation measurement conversant with formal work in the quantitative sociology literature. 
I argue that sticking close to this literature not only enriches opportunities for conversation between mathematicians and social scientists but also provides a fertile ground for interesting, accessible mathematics and data analysis.
This article is not a review and does not offer a comprehensive survey of segregation measurement; rather, it is an opinionated tour of one path through this constellation of topics. 
\Cref{sec:preliminaries} further describes the segregation measurement problem and establishes notation. 
In \Cref{sec:diversity-convexity}, I discuss diversity and argue that common intuitions lead us naturally to the class of convex functions defined on the probability simplex.
I then formulate segregation in \Cref{sec:segregation} as a \emph{Jensen information}: a comparison between global and average local diversity. 
I formalize several properties from the quantitative sociology literature and show that the Jensen information satisfies these properties.
In \Cref{sec:spatial-analysis}, I describe and illustrate three methods of conducting spatial analysis using the Jensen information: spatial smoothing, spatial information decomposition, and local Jensen information. 
\Cref{sec:discussion} closes with some discussion of future directions.   

\section{Preliminaries} \label{sec:preliminaries}

To motivate our mathematical formalism, we consider a set of simplified ``cities'' depicted in \Cref{fig:checkerboard}. 
In each of the four panels, individuals with one of three demographic traits are arranged into a synthetic ``city'' with nine areal units (``blocks''), each of which contains a population of nine individuals.
In City (a), only members of a single group are present; such a city can be neither diverse nor segregated. 
City (b) is diverse but not segregated; each group is represented uniformly across blocks.
Cities (c) and (d) do display spatial segregation.
In both, each block is demographically homogeneous. 
Furthermore, each city has exactly the same number of blocks dominated by the same groups.
Any measure of segregation which depends only on cross-tabulations of blocks and demographics will assign the same value of segregation to both panels. 
Such a measure may be called \emph{aspatial}. 
The ``checkerboard'' pattern of City (c), however, is arguably less segregated than (d) due to the spatial structure of the blocks: in (c), each block is surrounded by unlike blocks, while in (d) like blocks cluster together, producing monolithic homogeneous regions.  
The problem of constructing \emph{spatial} measures, which successfully distinguish between (c) and (d) in intuitive ways, has often been called the ``checkerboard problem'' in the quantitative sociology literature \cite{reardon3MeasuresSpatial2004}; as discussed above, such a measure must address the modifiable areal unit problem (MAUP) \cite{openshaw1984modifiable}.
Summarizing, a \emph{diversity} measure can distinguish panels (a) and (b); an aspatial \emph{segregation} measure can distinguish both from panel (c), and a \emph{spatial} segregation measure can distinguish panel (d) from panel (c).

We now establish notation. 
For $\ell \in \N$, let $[\ell] = \{1, 2, \ldots, \ell\}$. 
\begin{dfn} \label{def:simplex}
    For $\ell \in \N,$ the \emph{probability simplex} $\simplex{\ell}$ is the set 
    \begin{align*}
        \simplex{\ell} \eqdef \braces{\vy \in \R^{\ell} : \sum_{j = 1}^{\ell} y_j = 1, y_j \geq 0 \text{ for all } j \in [\ell]}
    \end{align*}
    of probability vectors on $\ell$ elements.\footnote{
      Although it is sometimes common to refer to $\simplex{\ell}$ as an $(\ell-1)$-simplex (this being its manifold dimension), we write $\simplex{\ell}$ to emphasize its role as the set of probability distributions over $\ell$ categories.
      } 
\end{dfn}
Let $\ve_{I} \in \simplex{\ell}$ be the vector with $1$ in the indices specified by the multi-index $I \subseteq [\ell]$ and $0$ elsewhere. 
Let $\relint{\simplex{\ell}}$ be the relative interior of $\simplex{\ell}$; if $\vy \in \relint{\simplex{\ell}}$, every entry of $\vy$ is strictly positive.

\begin{dfn} \label{def:population}
    A \emph{population} $\pop$ with $n$ locations in $d$ spatial dimensions and $\ell$ demographic categories is a triple $\pop = \standardTriple$, where: 
    \begin{itemize}
      \item $\mX \in \R^{n\times d}$ is a matrix with rows $\vx_i$, $i = 1, \ldots, n$.\footnote{In all our examples $n = 2$, though nothing in our results below formally depends on this. The matrix $\mX$ can be replaced by a discrete metric space $(\cX, d)$ where $d$ describes generalized distance between locations; under this replacement all the remaining results in this work hold except for \Cref{dfn:local-info,thm:local-info}, which rely on smooth approximations to discrete functions in Euclidean space.} 
      \item $\mY \in \R^{n \times \ell}$ is a row-stochastic matrix: for each row $i$, $\vy_i \in \simplex{\ell}$.
      \item $\vmu \in \simplex{n}$.   
    \end{itemize}
    We denote the set of all populations with $n$ locations, $d$ dimensions, and $\ell$ demographic categories as $\Pops_{nd\ell}$. 
\end{dfn}
In this definition, the matrix $\mX$ represents space: element $\vx_i$ of $\mX$ can be viewed as giving the location of the $i$th spatial subunit of the population. 
The matrix $\mY$ represents the demographic distribution of the population according to some attribute such as gender, race, or education level: the $i$th row $\vy_i$ of $\mY$ gives the proportion of individuals in each demographic category at location $i$.
Finally, the probability vector $\mu$ represents the population density across locations; the proportion of individuals living in location $i$ is $\mu_i$. 
The mean population vector $\bar{\vy} = \sum_{i = 1}^n \mu_i \vy_i$ has entries giving the marginal probabilities that an individual chosen uniformly at random from the entire population belongs to each demographic category. 
\begin{dfn} \label{def:population-measurement}
  A \emph{population measurement} is a function $f: \Pops_{nd\ell} \rightarrow \R$.
\end{dfn}
We aim to construct intuitive, tractable population measurements that capture natural intuitions about diversity and segregation.

\section{Diversity and Convexity} \label{sec:diversity-convexity}

A population is \emph{diverse} with respect to an attribute when it contains many categories of that attribute in roughly equal or equitable proportions.
Diversity is usually considered to be a property of a population as a whole, without regard for spatial subdivisions. 
Thus, a diversity measurement $f:\Pops_{nd\ell} \rightarrow \R$ is a function of the mean population demographic vector: $f(\pop) = \psi(\bar{\vy})$ for some function $\psi: \simplex{\ell} \rightarrow \R$.

Working in mathematical ecology, Pielou \cite{pielou1977mathematical} gives a proposed set of criteria for diversity measurements: 
\begin{enumerate}
  \item \textbf{Uniform maximization.} ``\emph{Diversity should be maximized when all groups are present in equal proportions.}'' 
  Formally, $\frac{1}{\ell} \ve_{[\ell]} = \argmax_{\vy \in \simplex{\ell}} \psi(\vy)$. 
  \item \textbf{Monotonicity under group addition.} ``\emph{In two populations in which all groups are equally represented, the population with the greatest number of groups will be more diverse.}'' 
  Formally, if $I\subset [\ell]$, then $\psi\paren{\ve_{I}/|I|} < \psi\paren{\ve_{[\ell]}/\ell}$.\footnote{Pielou also offers a third axiom, which is more restrictive and essentially characterizes a single diversity measure \cite{chakrabarti2005shannon}; we discuss it in the supplementary material.} 
\end{enumerate}
Pielou's proposed axioms capture important intuitions, but they don't offer a recipe for constructing useful diversity functions $\psi$. 
A hint is offered by monotonicity; we can think of a population uniformly distributed across $\ell$ demographic categories as a weighted mixture of two populations, one with $\abs{I}$ categories (where $I \subset [\ell]$) and another with $\ell - \abs{I}$ categories.
One candidate generalization of the group addition property is therefore that \emph{diversity should increase on average when populations are mixed}. 
Formally, let $\vy_1, \vy_2 \in \simplex{\ell}$ be two population demographic vectors and let $\lambda \in (0,1)$ be a mixture weight (expressing, for example, the relative population sizes). 
Mixing the populations forms the new demographic vector $\vy = \lambda \vy_1 + (1-\lambda) \vy_2$.
Our mixing intuition suggests that if $\vy_1 \neq \vy_2$, then 
\begin{align}
  \psi(\vy) > \lambda \psi(\vy_1) + (1-\lambda) \psi(\vy_2)\;. \label{eq:concave}
\end{align}
Here, the left-hand side gives the diversity of the mixed population $\vy$, while the right-hand side gives the $\lambda$-weighted average diversity of the original populations $\vy_1$ and $\vy_2$.
\Cref{eq:concave} leads us directly to the topic of convex functions. 
\begin{dfn}\label{def:convexity}
  Set $S\in \R^d$ is a \emph{convex set} if for all $\vy_1, \vy_2 \in S$ and $\lambda \in [0, 1]$, we have $\lambda \vy_1 + (1-\lambda) \vy_2 \in S$.
  If $S$ is a convex set, then $\phi: S \rightarrow \R$ is a \emph{convex function} if for all $\vy_1, \vy_2 \in S$ and $\lambda \in [0, 1]$, we have $\phi(\lambda \vy_1 + (1-\lambda) \vy_2) \leq \lambda \phi(\vy_1) + (1-\lambda) \phi(\vy_2)$.
  Function $\phi$ is \emph{strictly convex} if the above inequality is strict whenever $\vy_1 \neq \vy_2$ and $\lambda \in (0, 1)$. 
  Function $\psi$ is (\emph{strictly}) \emph{concave} if $\phi = -\psi$ is (strictly) convex.\footnote{
    Throughout the remainder of this article, we state without proof the convexity of certain functions as well as general properties of convex functions. 
    I recommend \cite{boydConvexOptimization2023} for a thorough treatment and proofs.
    A convenient approach to proving convexity for each of the following examples is to compute the Hessian $\nabla^2 \phi(\vy)$ and show that it is positive semidefinite for all $\vy \in \simplex{\ell}$ \cite{boydConvexOptimization2023}.}
\end{dfn}
The simplex $\simplex{\ell}$ is a convex set, and so the notion of convexity applies to functions defined on $\simplex{\ell}$.
We will abuse notation slightly and refer to a population measurement $f(\pop) = -\phi(\bar{\vy})$ as (\emph{strictly}) \emph{concave} if $\phi$ is (strictly) convex.\footnote{Following convention, we will emphasize convexity of $\phi$ rather than concavity of $-\phi$.}
Thus, our intuition expressed by \cref{eq:concave} is that a population diversity measurement should be strictly concave. 

\begin{figure}
  \centering
  \includegraphics[width=0.5\textwidth]{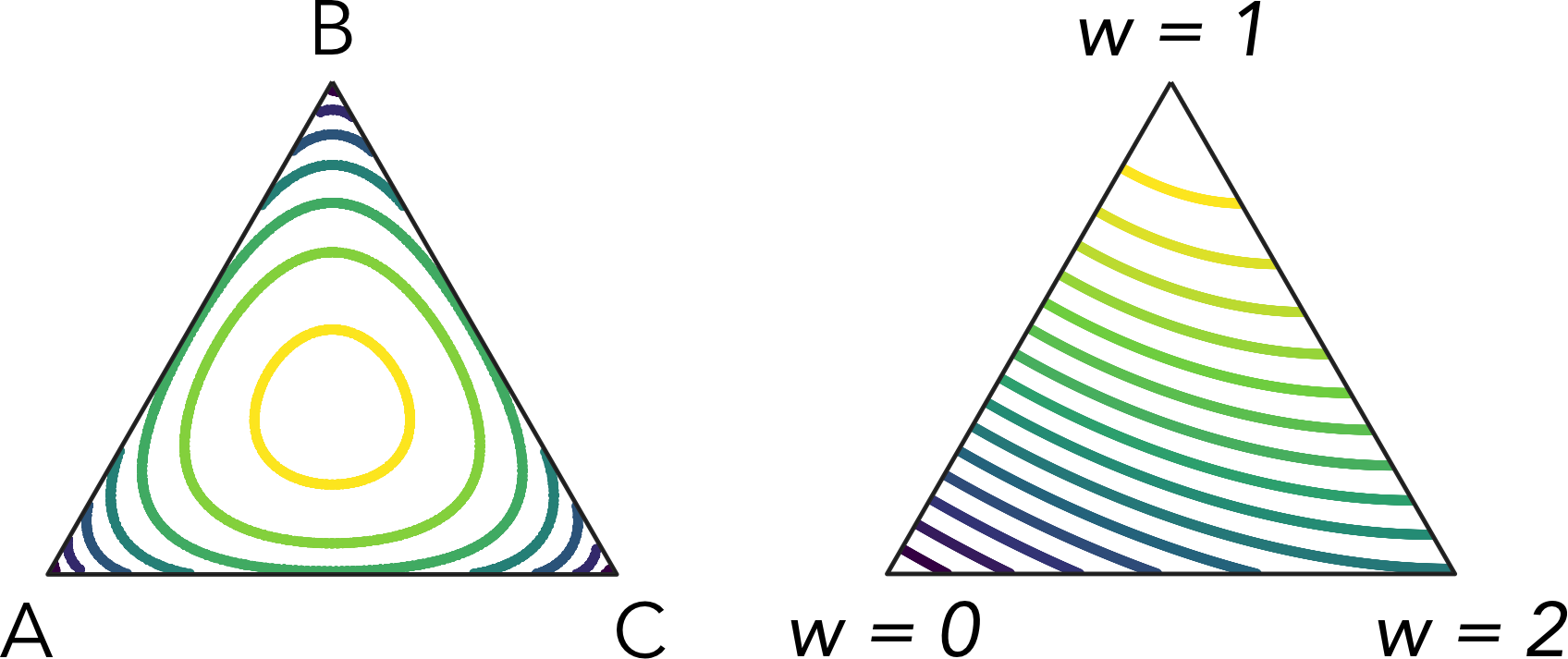}
  \caption{
    (Left): Isocontours of the Shannon entropy on the simplex $\simplex{3}$.
    (Right): Isocontours of the wealth distribution measurement from \Cref{example:equality} with possible wealth values $w \in \{1, 2, 3\}$ and reference distribution $\ve_1$ (corresponding to all individuals possessing 1 unit of wealth). 
    } \label{fig:isocontours}
\end{figure}

\begin{example}[Shannon Entropy] \label{example:shannon}
  If $\phi_H(\vy) = \sum_{j = 1}^\ell y_j \log y_j$, the resulting diversity measurement $\psi_H = -\phi_H$ is the \emph{Shannon entropy}, which is widely used in ecology, sociology, and information theory.\footnote{Throughout this article, we will use the notation $\log$ to denote the natural logarithm, as well as the standard convention that $0 \log 0 = 0$.} 
\end{example}

\begin{example}[Simpson Index] \label{example:simpson}
  If $\phi_S(\vy) = \sum_{j = 1}^\ell y_j^2 - 1$, the resulting diversity measurement $\psi_S = -\phi_S$ is the \emph{Simpson index}, which is widely used in ecology and sociology.
  The Simpson index has an attractive physical interpretation: it is the probability that two individuals chosen uniformly at random from the population belong to different demographic categories.
\end{example}

\begin{example}[A Bespoke Measurement of Wealth Distribution] \label{example:equality}
  For each $w \in 0,\ldots,w_\mathrm{max}$, let $y_{iw}$ give the proportion of individuals in location $i$ who possess $w$ units of wealth.
  Let $\mC$ be the  lower triangular matrix of ones: $\mC_{ij} = 1$ if $i \geq j$ and $\mC_{ij} = 0$ otherwise. 
  Then $\mC \vy_i$ gives the cumulative distribution vector of wealth at location $i$. 
  Fix a reference distribution vector $\vyref$. 
  The function $\phi_{\mathrm{eq}}(\vy) = \norm{\mC (\vy - \vyref)}_2^2$ is strictly convex, and so $f(\pop) = -\norm{\mC (\bar{\vy} - \vyref)}_2^2$ is a concave population measurement.
  This population measurement expresses a normative preference for the distribution of wealth corresponding to $\vyref$; it is maximized when $\bar{\vy} = \vyref$. 
  A value such as wealth equality can be encoded by setting $\vyref$ to be the $\ve_{\bar{w}}$ in which all individuals possess the same amount of wealth $\bar{w}$.
\end{example}

We can now address Pielou's axioms for diversity measures: 

\begin{thm} \label{thm:monotone-under-group-addition}
  If $\phi: \simplex{\ell} \rightarrow \R$ is strictly convex, then the population measurement $f(\pop) = -\phi(\bar{\vy})$ satisfies Pielou's monotonicity property. 
\end{thm}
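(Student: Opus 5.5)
Write $k = \abs{I}$ with $1 \leq k < \ell$, let $\vy_I = \ve_I/k$ and $\vmu_\ell = \ve_{[\ell]}/\ell$. Since $f(\pop) = -\phi(\bar{\vy})$, Pielou's monotonicity property is equivalent to the strict inequality $\phi(\vmu_\ell) < \phi(\vy_I)$. I would prove it by writing the uniform vector $\vmu_\ell$ as an average of relabelled copies of $\vy_I$ and applying a strict Jensen inequality.

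\textbf{Hypothesis needed.} Strict convexity alone cannot give this. For instance, $\phi(\vy) = \norm{\vy - \ve_1}_2^2$ is strictly convex and is minimized at $\ve_1 = \vy_{\{1\}}$, so the inequality fails for $I = \{1\}$. The intended reading must therefore also assume that $\phi$ treats all demographic categories symmetrically. This holds for the Shannon and Simpson examples. Formally, I will assume $\phi(P\vy) = \phi(\vy)$ for every permutation matrix $P$, and state this hypothesis explicitly in the proof.

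\textbf{Construction.} Let $P$ be the cyclic shift $(P\vy)_j = y_{j-1}$, with indices taken mod $\ell$, and consider the $\ell$ vectors $P^t \vy_I$ for $t = 0, \ldots, \ell-1$. Fix a coordinate $j$. Then $(P^t \vy_I)_j = 1/k$ exactly when $j - t \in I$, which happens for exactly $k$ values of $t$; otherwise the entry is $0$. Hence
\begin{align*}
  \frac{1}{\ell}\sum_{t=0}^{\ell-1} P^t \vy_I = \frac{1}{\ell}\cdot k \cdot \frac{1}{k}\,\ve_{[\ell]} = \vmu_\ell .
\end{align*}
These vectors are not all equal. If they were, then $P\vy_I = \vy_I$, which forces all entries of $\vy_I$ to be equal; that is impossible because $I$ is nonempty and proper.

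\textbf{Jensen step.} Next I need the finite-point version of strict convexity: if $\vz_1, \ldots, \vz_m$ are not all equal and the weights $\lambda_t > 0$ sum to $1$, then $\phi(\sum_t \lambda_t \vz_t) < \sum_t \lambda_t \phi(\vz_t)$. This is the only step needing care, and I would prove it by induction on $m$ from the two-point definition. Order the points so that $\vz_m$ differs from some earlier point. Write the combination as $(1-\lambda_m)\vw + \lambda_m \vz_m$, where $\vw$ is the normalized combination of the first $m-1$ points. The ordinary (non-strict) Jensen inequality bounds $\phi(\vw)$ by the corresponding average of $\phi(\vz_1), \ldots, \phi(\vz_{m-1})$. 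Strictness then comes from one of two places:
\begin{itemize}
  \item If $\vw \neq \vz_m$, the two-point strict inequality applies directly.
  \item If $\vw = \vz_m$, then the first $m-1$ points are not all equal (otherwise they would all equal $\vz_m$, contradicting the ordering), so the induction hypothesis makes the bound on $\phi(\vw)$ strict.
\end{itemize}

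\textbf{Conclusion.} Applying the Jensen step with $\lambda_t = 1/\ell$ and $\vz_t = P^t \vy_I$, and then using symmetry of $\phi$, gives
\begin{align*}
  \phi(\vmu_\ell) < \frac{1}{\ell}\sum_{t=0}^{\ell-1}\phi(P^t\vy_I) = \phi(\vy_I) .
\end{align*}
Negating yields $f$-values satisfying $\psi(\ve_I/\abs{I}) < \psi(\ve_{[\ell]}/\ell)$, which is Pielou's monotonicity property.
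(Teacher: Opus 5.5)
Your proof is correct, and the symmetry hypothesis you add is genuinely necessary: your $\phi(\vy) = \norm{\vy - \ve_1}_2^2$ refutes the statement as written. The paper itself supplies another counterexample. Take $\phi_{\mathrm{eq}}$ from \Cref{example:equality} with $\vyref = \ve_1$, the choice plotted in \Cref{fig:isocontours}. It is strictly convex, and since $\mC$ is invertible it is uniquely minimized at $\ve_1 = \ve_{\{1\}}$. So monotonicity fails for $I=\{1\}$, contrary to the paper's remark that all three examples satisfy it.

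The paper's proof takes a different route that tries to avoid symmetry. It writes $\ve_{[\ell]}/\ell = \lambda\,\ve_I/\abs{I} + (1-\lambda)\,\ve_{[\ell]\setminus I}/(\ell-\abs{I})$ with $\lambda = \abs{I}/\ell$ and applies two-point strict convexity. It then claims the resulting average is at most $\phi(\ve_I/\abs{I})$ ``by nonnegativity.'' That step is invalid. It is equivalent to $\phi(\ve_{[\ell]\setminus I}/(\ell-\abs{I})) \le \phi(\ve_I/\abs{I})$, which nonnegativity does not imply; it fails in your example with $I=\{1\}$. The later shift by $\phi_{\min}$ would in any case only give a non-strict inequality. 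So the paper's argument does not prove the theorem, and by your counterexample no argument can without an extra hypothesis such as symmetry.

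Under symmetry, your argument uses the same symmetrization idea the paper uses for \Cref{thm:diversity-axioms-1}. In fact, monotonicity then follows at once from uniform maximization: the maximizer $\ve_{[\ell]}/\ell$ is unique and differs from $\ve_I/\abs{I}$ for nonempty proper $I$. Your version is more economical, since it averages over the $\ell$ cyclic shifts rather than all $\ell!$ permutations. It is also more careful: you prove the strict finite Jensen inequality for points that are not all equal but may repeat. That is exactly the situation for permuted copies of $\ve_I/\abs{I}$, and the paper passes over it as an ``inductive application of strict concavity.''
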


All three of the diversity functions in \Cref{example:shannon,example:simpson,example:equality} are strictly convex and therefore satisfy monotonicity.
The uniform maximizer property, however, requires additional conditions. 
  \begin{dfn}
    Function $\phi$ is \emph{symmetric} if for all $\vy \in \simplex{\ell}$ and all permutations $\sigma$ of $[\ell]$, we have $\phi(\vy) = \phi(\sigma(\vy))$, where $\sigma(\vy)$ is the vector obtained by permuting the entries of $\vy$ according to $\sigma$.
    Function $\phi: \simplex{\ell} \rightarrow \R$ is \emph{separable} if there exists a function $g: [0, 1] \rightarrow \R$ such that $\phi(\vy) = \sum_{j = 1}^{\ell} g(y_j)$ for all $\vy \in \simplex{\ell}$.\footnote{Separability implies symmetry, but not conversely.}
  \end{dfn}

  \begin{thm}\label{thm:diversity-axioms-1}
    If $\phi: \simplex{\ell} \rightarrow \R$ is strictly convex and symmetric, then the population measurement $f(\pop) = -\phi(\bar{\vy})$ satisfies Pielou's uniform maximization property.
  \end{thm}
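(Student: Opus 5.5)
The plan is a symmetrization argument: average $\phi$ over all permutations of the coordinates and apply Jensen's inequality. Uniform maximization of $f(\pop) = -\phi(\bar{\vy})$ means that $\frac{1}{\ell}\ve_{[\ell]}$ is the unique maximizer of $\psi = -\phi$ on $\simplex{\ell}$, which is the same as being the unique minimizer of $\phi$. So it suffices to show that $\phi(\vy) > \phi\paren{\frac{1}{\ell}\ve_{[\ell]}}$ for every $\vy \in \simplex{\ell}$ with $\vy \neq \frac{1}{\ell}\ve_{[\ell]}$.

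Fix such a $\vy$. Let $S_\ell$ denote the set of all $\ell!$ permutations $\sigma$ of $[\ell]$, and write $\sigma(\vy)$ as in the definition of symmetry. The key identity is
\begin{align*}
  \frac{1}{\ell!}\sum_{\sigma \in S_\ell} \sigma(\vy) = \frac{1}{\ell}\ve_{[\ell]}\;.
\end{align*}
It holds because, for each coordinate $j$, exactly $(\ell-1)!$ permutations send each index $k$ to position $j$. The $j$th entry of the average is therefore $\frac{(\ell-1)!}{\ell!}\sum_k y_k = \frac{1}{\ell}$. Each $\sigma(\vy)$ also lies in $\simplex{\ell}$, so this is a convex combination of points of the simplex.

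Next, apply the finite form of Jensen's inequality for convex functions, obtained from \Cref{def:convexity} by induction on the number of points. This gives
\begin{align*}
  \phi\paren{\frac{1}{\ell}\ve_{[\ell]}} \leq \frac{1}{\ell!}\sum_{\sigma \in S_\ell}\phi(\sigma(\vy)) = \phi(\vy)\;,
\end{align*}
where the equality uses symmetry of $\phi$.

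The main obstacle is upgrading this inequality to a strict one. I will use the two-point form of strict convexity directly rather than a strict multi-point Jensen. Since $\vy$ is not uniform, it has two coordinates $y_a \neq y_b$. Let $\tau$ be the transposition of $a$ and $b$; then $\tau(\vy) \neq \vy$. Strict convexity with $\lambda = 1/2$, together with symmetry, gives
\begin{align*}
  \phi\paren{\tfrac{1}{2}(\vy + \tau(\vy))} < \tfrac{1}{2}\phi(\vy) + \tfrac{1}{2}\phi(\tau(\vy)) = \phi(\vy)\;.
\end{align*}
Now set $\vy' = \tfrac{1}{2}(\vy + \tau(\vy)) \in \simplex{\ell}$. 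Applying the non-strict permutation-averaging argument above to $\vy'$ yields $\phi\paren{\frac{1}{\ell}\ve_{[\ell]}} \leq \phi(\vy') < \phi(\vy)$. This is the required strict inequality, so $\frac{1}{\ell}\ve_{[\ell]}$ is the unique argmax of $\psi$, and the uniform maximization property holds.
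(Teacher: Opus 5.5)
Your proof is correct and follows the paper's route: write $\frac{1}{\ell}\ve_{[\ell]}$ as the average of all permuted vectors $\sigma(\vy)$, apply Jensen's inequality, and use symmetry to collapse the average back to $\phi(\vy)$. The only difference is how you get strictness. The paper cites an unelaborated ``inductive application of strict concavity'' to the $\ell!$-point average, and it does not exclude the case $\vy = \frac{1}{\ell}\ve_{[\ell]}$, where that strict inequality fails. Your single strict two-point step with a transposition, followed by the non-strict average, is a cleaner and fully rigorous way to obtain it.
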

  Both the Shannon entropy given by $\phi_{H}$ and the Simpson index given by $\phi_{S}$ are separable, therefore symmetric, and thus also satisfy the uniform maximizer property, while the wealth measurement $\phi_{\mathrm{eq}}$ is not symmetric and does not.
  We can see these results visually in \Cref{fig:isocontours}.

\section{Segregation as Information} \label{sec:segregation}

Examining the checkerboard ``cities'' (b-d) of \Cref{fig:checkerboard}, we see that in city (b) each location is exactly as diverse as the population as a whole ($\vy_i = \vy$ and thus $\phi(\vy_i) = \phi(\vy)$ for each $i$), while in cities (c) and (d) each location is less diverse than the population as a whole.
This suggests a simple quantitative segregation measurement: the difference between global and mean local diversities. 
\begin{dfn}[Jensen Information \cite{banerjeeClusteringBregmanDivergences2005}]
  Given a concave diversity function $f = -\phi$,  the \emph{Jensen information}  of a population $\pop = \standardTriple \in \Pops_{nd\ell}$ is
  \begin{align}
    \info_{\phi}(\pop) &\eqdef f(\bar{\vy}) - \sum_{i = 1}^n \mu_i f(\vy_i) \nonumber \\ 
    &= \sum_{i = 1}^n \mu_i \brackets{\phi(\vy_i) - \phi(\bar{\vy})} \;.  \label{eq:jensen-information}
  \end{align}
\end{dfn}
Jensen's Inequality, a standard result in convex analysis, implies that $\info_{\phi}(\pop) \geq 0$ for all populations $\pop$ and convex functions $\phi$. 
If $\phi$ is strictly convex, then $\info_{\phi}(\pop) = 0$ if and only if $\vy_i = \bar{\vy}$ for all $i \in [n]$.

\begin{example}[Shannon Mutual Information]
  The Jensen information $\info_{\phi_H}(\pop)$ corresponding to the negative Shannon entropy $\phi_H$ is the \emph{mutual information} between the random variables $X$ and $Y$ describing location and demographic category of a randomly chosen individual from the population: 
  \begin{align}
    \info_{\phi_H}(\pop) = \sum_{i = 1}^n \sum_{j = 1}^{\ell} \mu_i y_{ij} \log \frac{y_{ij}}{\bar{y}_j}\;. \label{eq:shannon-jensen-info}
  \end{align}
  The mutual information has an attractive interpretation as the extent to which knowledge of an individual's location reduces information-theoretic uncertainty about their demographic category, and vice versa \cite{cover2012elements}.
\end{example}

\begin{example}[Simpson Information]
  The Jensen information $\info_{\phi_S}(\pop)$ corresponding to the Simpson diversity index $\phi_S$ is the \emph{Simpson information} of the population: 
  \begin{align}
    \info_{\phi_S}(\pop) = \sum_{i = 1}^n \mu_i \norm{\vy_i}_2^2 - \norm{\bar{\vy}}_2^2 \;. \label{eq:simpson-jensen-info}
  \end{align}
\end{example}

\begin{example}[Wealth Information]
  Let $\phi_{\mathrm{eq}}(\vy) = \norm{\mC (\vy - \vyref)}_2^2$ be the wealth equality function from \Cref{example:equality}. 
  Then, the Jensen information $\info_{\phi_{\mathrm{eq}}}(\pop)$ is given by 
  \begin{align}
    I_{\phi_{\mathrm{eq}}}(\pop) = \sum_{i = 1}^n \mu_i \norm{\mC\vy_i}_2^2 - \norm{\mC\bar{\vy}}_2^2 \;. \label{eq:wealth-jensen-info}
  \end{align}
  Importantly, this Jensen information is independent of the reference distribution $\vyref$. A derivation is provided in the supplementary material. 
\end{example}

Although the Jensen information in \cref{eq:jensen-information} is nonnegative, individual terms of the form $\phi(\vy_i) - \phi(\bar{\vy}) $ can be negative. 
This means that a given location's ``contribution to segregation'' cannot simply be interpretably measured from the terms in \cref{eq:jensen-information}.
It is sometimes useful for analytic purposes to write the Jensen information as a weighted sum of nonnegative contributions from each location: 
\begin{align}
  \info_{\phi}(\pop) = \sum_{i = 1}^n \mu_i g(\vy_i, \bar{\vy}) \;. \label{eq:information-equivalence}
\end{align}
Under mild regularity conditions, this is possible in a unique way: 
\begin{thm}
  Let $\phi: \simplex{\ell} \rightarrow \R$ be a strictly convex function which is differentiable on $\relint{\simplex{\ell}}$. 
  Then, \cref{eq:information-equivalence} holds for some function $g:\relint{\simplex{\ell}} \times \simplex{\ell} \rightarrow \R$ and for any population $\pop \in \Pops_{nd\ell}$ if and only if $g$ is the \emph{Bregman divergence} $\divergence{\phi}{\cdot}{\cdot}$ induced by $\phi$: 
  \begin{align}
    \divergence{\phi}{\vy}{\bar{\vy}} \eqdef \phi(\vy) - \phi(\bar{\vy}) - \nabla \phi(\bar{\vy})^{\top} (\vy - \bar{\vy}) \;. \label{eq:bregman-divergence}
  \end{align}
\end{thm}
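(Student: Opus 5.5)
The plan is to verify the ``if'' direction by direct expansion, and for ``only if'' to show that any other admissible $g$ differs from $\divergence{\phi}{\cdot}{\cdot}$ by a term linear in $\vy - \bar{\vy}$, which nonnegativity then kills. The statement does not list nonnegativity as a hypothesis, but it is needed: the trivial choice $g(\vy, \bar{\vy}) = \phi(\vy) - \phi(\bar{\vy})$ also satisfies \cref{eq:information-equivalence} by \cref{eq:jensen-information}. I therefore read the statement as requiring $g \geq 0$, as the surrounding discussion of ``nonnegative contributions'' intends. Throughout, $\bar{\vy}$ is taken in $\relint{\simplex{\ell}}$ so that $\nabla\phi(\bar{\vy})$ exists.

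\textbf{If direction.} Substitute \cref{eq:bregman-divergence} into $\sum_i \mu_i \divergence{\phi}{\vy_i}{\bar{\vy}}$. This gives $\sum_i \mu_i[\phi(\vy_i) - \phi(\bar{\vy})] - \nabla\phi(\bar{\vy})^\top \sum_i \mu_i(\vy_i - \bar{\vy})$. The last sum vanishes because $\sum_i \mu_i \vy_i = \bar{\vy}$ and $\sum_i \mu_i = 1$, leaving $\info_\phi(\pop)$. Nonnegativity of $D_\phi$ follows from convexity: $\phi$ lies above its tangent plane at $\bar{\vy}$.

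\textbf{Only if.} Set $h(\vy, \vmu') \eqdef g(\vy, \vmu') - \divergence{\phi}{\vy}{\vmu'}$, writing $\vmu'$ for the mean. By the ``if'' direction, $\sum_i \mu_i h(\vy_i, \bar{\vy}) = 0$ for every population. Fix $\vmu' \in \relint{\simplex{\ell}}$ and a tangent direction $\mathbf{v}$ with $\sum_j v_j = 0$. Put $k(s) \eqdef h(\vmu' + s\mathbf{v}, \vmu')$ for $s$ small enough that the point stays in the simplex.
\begin{itemize}
  \item \emph{One location.} Taking $n = 1$ gives $k(0) = 0$.
  \item \emph{Two locations.} Take the points $\vmu' + s\mathbf{v}$ and $\vmu' - s'\mathbf{v}$ with weights $s'/(s+s')$ and $s/(s+s')$, so the mean is $\vmu'$. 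This gives $s' k(s) + s\,k(-s') = 0$. Hence $k(s)/s$ is a single constant for all $s \neq 0$, so $h(\vmu' + \mathbf{v}, \vmu') = L(\mathbf{v})$ with $L$ homogeneous along each line.
  \item \emph{Three locations.} Take $\vmu' + \mathbf{v}$, $\vmu' + \mathbf{w}$ and $\vmu' - \mathbf{v} - \mathbf{w}$, all small, each with weight $1/3$. This gives $L(\mathbf{v}) + L(\mathbf{w}) + L(-\mathbf{v} - \mathbf{w}) = 0$. Combined with oddness, $L$ is additive, hence linear on the tangent space.
\end{itemize}
So $g(\vy, \vmu') = \divergence{\phi}{\vy}{\vmu'} + \mathbf{a}(\vmu')^\top(\vy - \vmu')$ for some vector $\mathbf{a}(\vmu')$.

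\textbf{Main obstacle: the last step.} Without nonnegativity the linear term genuinely survives, as the counterexample above shows. With $g \geq 0$, differentiability of $\phi$ at $\vmu'$ gives $\divergence{\phi}{\vmu' + t\mathbf{v}}{\vmu'} = o(t)$. Then $0 \leq o(t) + t\,\mathbf{a}^\top \mathbf{v}$ for small $t$ of both signs, which forces $\mathbf{a}^\top \mathbf{v} = 0$ for every tangent $\mathbf{v}$. Hence $g = D_\phi$, which completes the characterization.
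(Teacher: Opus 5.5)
Your proof is correct, and you are right that a nonnegativity hypothesis is needed. As written, $g(\vy,\bar{\vy})=\phi(\vy)-\phi(\bar{\vy})$ already satisfies \cref{eq:information-equivalence} by \cref{eq:jensen-information}. Your reading ($g\ge 0$, with the mean in $\relint{\simplex{\ell}}$ so that $\nabla\phi(\bar{\vy})$ exists) matches what the surrounding text about ``nonnegative contributions'' intends.

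Your ``if'' direction is the same direct calculation the paper attributes to \cite{banerjeeClusteringBregmanDivergences2005}. For ``only if'' the paper gives no argument of its own and only cites \cite{chodrowEquivalenceInformationsCharacterizes2025b}, so your functional-equation argument is a genuinely self-contained alternative. One- and two-point populations centered at $\bar{\vy}$ force $h=g-D_\phi$ to be exactly linear along each line through $\bar{\vy}$. The real homogeneity comes directly from the free choice of weights, so no regularity of $g$ is needed. Nonnegativity together with $D_\phi(\bar{\vy}+t\mathbf{v}\,\|\,\bar{\vy})=o(t)$ then kills each slope.

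The three-point additivity step is not actually needed for the final conclusion, because the nonnegativity argument works one line at a time. The characterization therefore already holds using only two-location populations. What the three-point step does buy is a sharper intermediate fact: with no sign condition, the admissible $g$ are exactly $D_\phi(\vy\,\|\,\bar{\vy})+\mathbf{a}(\bar{\vy})^\top(\vy-\bar{\vy})$ for an arbitrary $\mathbf{a}$. This describes precisely how the literal statement fails; your counterexample is the case $\mathbf{a}=\nabla\phi(\bar{\vy})$.
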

The ``if'' direction of this theorem is a standard result by direct calculation \cite{banerjeeClusteringBregmanDivergences2005}, while the ``only if'' direction was recently shown by the present author \cite{chodrowEquivalenceInformationsCharacterizes2025b}. 
The Bregman divergence is a measure of the difference between the value of $\phi$ at $\vy$ and the value of the linearization (first-order Taylor expansion) of $\phi$ around $\bar{\vy}$ evaluated at $\vy$.
The Bregman divergence can be interpreted as a measure of dissimilarity of $\vy$ from $\bar{\vy}$.  
It is always nonnegative and is zero if and only if $\vy = \bar{\vy}$.
The Bregman divergence is not a metric, as it is not symmetric and does not satisfy the triangle inequality, although it does satisfy many other useful properties as a comparison function \cite{banerjeeClusteringBregmanDivergences2005}. 
In the case of the Shannon mutual information \eqref{eq:shannon-jensen-info}, the Bregman divergence is the Kullback-Leibler divergence $g(\vy, \bar{\vy}) = \sum_{j = 1}^{\ell} y_j \log(y_j / \bar{y}_j)$. 
In the case of the Simpson information \eqref{eq:simpson-jensen-info}, the Bregman divergence is the squared Euclidean distance $g(\vy, \bar{\vy}) = \norm{\vy - \bar{\vy}}_2^2$.
Finally, in the case of the wealth information \eqref{eq:wealth-jensen-info}, the Bregman divergence is $g(\vy, \bar{\vy}) = \norm{\mC (\vy - \bar{\vy})}_2^2$, the Mahalanobis distance with respect to the matrix $\mC^T\mC$.
\Cref{fig:simplex-information} gives a geometric view of \cref{eq:information-equivalence}.  

 \begin{figure}
    \centering
    \includegraphics[width=0.5\textwidth]{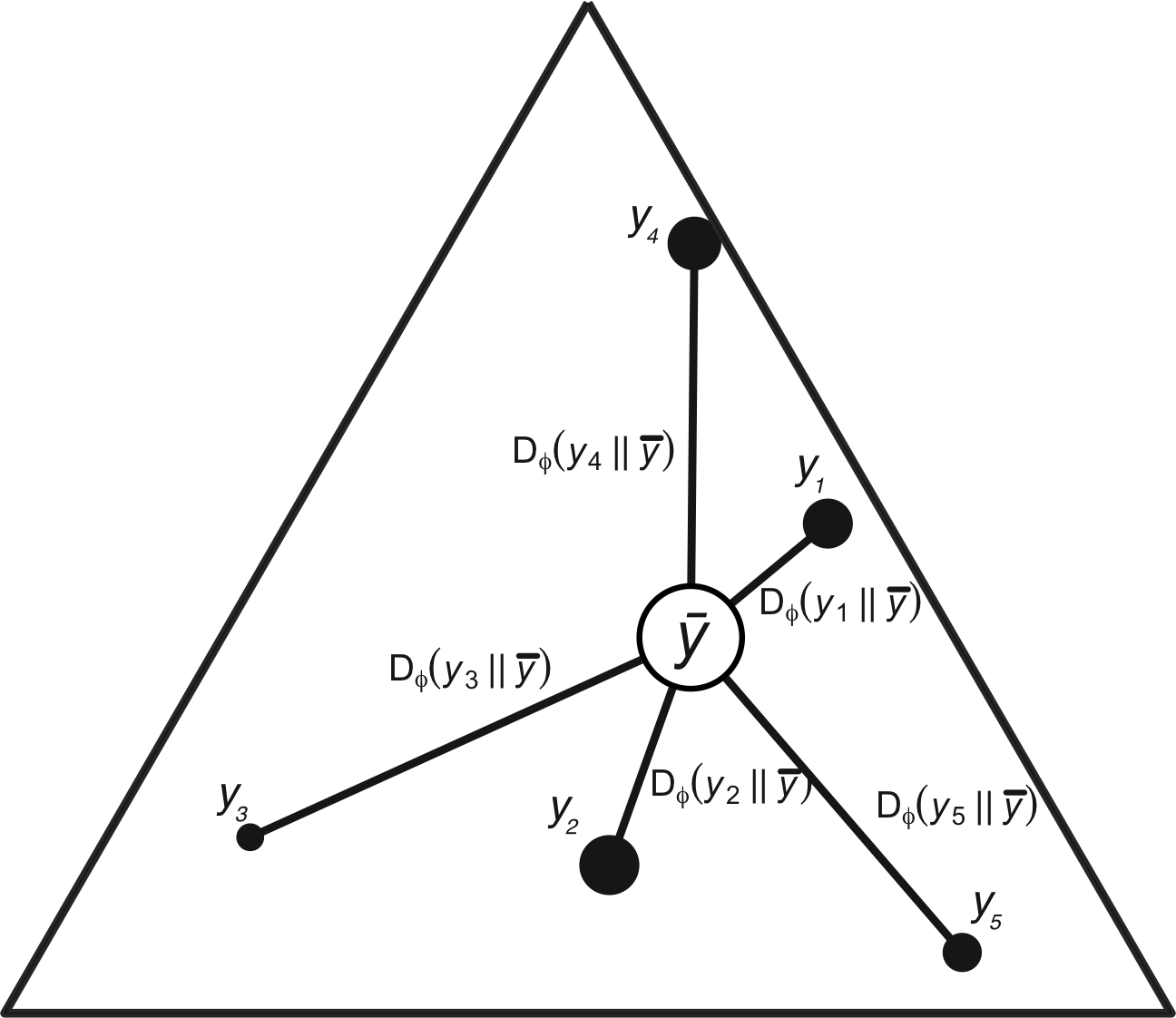}
    \caption{
      Geometric illustration of \cref{eq:information-equivalence}. 
      Points $\vy_i$, $i = 1,\ldots, 5$ are sized according to their weight $\mu_i$ and are located in the simplex $\simplex{3}$ according to their demographic distribution.
      The Jensen information $\info_\phi(\pop)$ is the weighted mean of the Bregman divergences $\divergence{\phi}{\vy_i}{\bar{\vy}}$ of each point from the mean $\bar{\vy}$.
      The visual intuition relies on viewing the Bregman divergence as ``distance-like,'' though it should be remembered that it is not a metric. 
    }\label{fig:simplex-information}
  \end{figure}

A useful property of Bregman divergences is that they are \emph{locally} quadratic: for any $\vy \in \relint{\simplex{\ell}}$ and $\vdelta$ in the tangent space to $\simplex{\ell}$ at $\vy$, we have
\begin{align}
  \divergence{\phi}{\vy + \vdelta}{\vy} = \frac{1}{2} \vdelta^{\top} \hess{\phi}(\vy) \vdelta + o(\norm{\vdelta}_2^2)  \;, \label{eq:local-quadratic}
\end{align}
where $\hess{\phi}(\vy)$ is the Hessian of $\phi$ at $\vy$ and $o(\norm{\vdelta}_2^2)$ is a term which is asymptotically smaller than $\norm{\vdelta}_2^2$ as $\norm{\vdelta}_2 \rightarrow 0$.\footnote{ 
A heuristic derivation proceeds by expanding $\phi(\vy + \vdelta)$ about $\vy$ and noting that the other terms in \cref{eq:bregman-divergence} cancel the zeroth and first-order terms in the expansion.
} 
Since $\phi$ is strictly convex, the Hessian $\hess{\phi}(\vy)$ is positive definite for all $\vy \in \relint{\simplex{\ell}}$, and so the leading term in the expansion is a positive definite quadratic form.

\subsection{The Jensen Information as a Segregation Measure} \label{sec:segregation-properties}

In an influential article, Reardon and Firebaugh \cite{reardonMeasuresMultigroupSegregation2002} propose a set of desirable criteria for multigroup segregation measures, and assess a number of existing measures against these criteria with proofs and counterexamples.
Of the six measures evaluated by the authors, exactly two of these are Jensen informations, corresponding to scaled versions of the Shannon mutual information\footnote{In the segregation measurement setting the mutual information is sometimes referred to as the \emph{Theil index} or the \emph{information theory index}.} induced by the negative Shannon entropy $\phi_H$ and the Simpson index induced by the square norm $\phi_S$.
These two measurements come out on top in satisfying the various criteria proposed by the authors.
In particular, Reardon and Firebaugh ``...\emph{conclude that the [Shannon mutual information] is the most conceptually and mathematically satisfactory index}...''. 
It turns out this isn't a coincidence: many desirable intuitions about segregation follow from convexity.\footnote{Indeed, a set of relatively natural desiderata for a segregation measurement, including symmetry and a generalized monotonicity property, is sufficient to uniquely determine the Shannon information \cite{frankelMeasuringSchoolSegregation2011}.} 

Some of the Reardon and Firebaugh criteria are trivially satisfied by any Jensen information; for example, their \emph{size invariance} criterion expresses simply the desire that $f$ be a function only of the joint population \emph{distribution} and not absolute population counts. 
Here we'll state one of the nontrivial properties and give sufficient criteria for a Jensen information to satisfy it.
The intuition expressed by the \emph{exchanges} property is that when individuals move from locations where they are relatively overrepresented to locations where they are relatively underrepresented, the population becomes less segregated.
Let $\tilde{\ve}_j \in \R^{\ell}$ denote a standard basis vector in $\R^{\ell}$. 

\begin{dfn}
  An \emph{exchange} of magnitude $\epsilon$ of groups $a$ and $b$ between locations $i$ and $j$ is a map $\exc{a}{b}{i}{j}{\epsilon}: \Pops_{nd\ell} \rightarrow \Pops_{nd\ell}$ which takes a population $\pop = \standardTriple$ and returns a new population $\pop' = (\mX, \mY', \mu')$ such that
  \begin{align*}
    \vmu'  &= \vmu, \\ 
    \vy_i' &= \vy_i + \frac{\epsilon}{\mu_i} (\tilde{\ve}_b - \tilde{\ve}_a), \\
    \vy_j' &= \vy_j + \frac{\epsilon}{\mu_j} (\tilde{\ve}_a - \tilde{\ve}_b). 
  \end{align*}
\end{dfn}

\begin{dfn}
  A segregation measurement $S$ satisfies the \emph{exchanges property of Reardon and Firebaugh} \cite{reardonMeasuresMultigroupSegregation2002} if whenever $y_{i a} < y_{j a}$ and $y_{i b} > y_{j b}$, we have $S(\exc{a}{b}{i}{j}{\epsilon}(\pop)) < S(\pop)$ for all $\epsilon$ sufficiently small. 
\end{dfn}

\begin{thm} \label{thm:exchanges}
  If $\phi$ is separable and strictly convex, then the Jensen information $\info_{\phi}$ satisfies the exchanges property of Reardon and Firebaugh.
\end{thm}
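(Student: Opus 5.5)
The plan is to reduce the exchanges property to a one-dimensional monotone-slope statement about the separable kernel $g$. The first observation is that an exchange leaves the mean vector unchanged: since $\vmu' = \vmu$,
\begin{align*}
  \bar{\vy}' - \bar{\vy} = \mu_i \cdot \frac{\epsilon}{\mu_i}(\tilde{\ve}_b - \tilde{\ve}_a) + \mu_j \cdot \frac{\epsilon}{\mu_j}(\tilde{\ve}_a - \tilde{\ve}_b) = 0.
\end{align*}
Hence, by \cref{eq:jensen-information}, the term $\phi(\bar{\vy})$ does not change. The change in Jensen information is then
\begin{align*}
  \Delta \eqdef \info_\phi(\pop') - \info_\phi(\pop) = \mu_i\brackets{\phi(\vy_i') - \phi(\vy_i)} + \mu_j\brackets{\phi(\vy_j') - \phi(\vy_j)}.
\end{align*}
Writing $\phi(\vy) = \sum_k g(y_k)$, only the coordinates $a$ and $b$ at locations $i$ and $j$ move. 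So $\Delta$ collapses to four terms of the form $\mu\brackets{g(y \pm \epsilon/\mu) - g(y)}$.

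Next I would rewrite each of these terms as $\pm\epsilon$ times a secant slope of $g$, using $\mu \cdot (\epsilon/\mu) = \epsilon$. This avoids assuming that $g$ is differentiable. Let $s(u,v) = (g(v)-g(u))/(v-u)$ denote the secant slope. Before doing so I must fix the direction of the exchange. It has to be the one that moves group $a$ into location $i$, where it is underrepresented ($y_{ia} < y_{ja}$), and moves group $b$ into location $j$, where it is underrepresented ($y_{jb} < y_{ib}$). With the signs exactly as written in the definition, the first-order change has the opposite sign. I would therefore either state the exchange with the roles of $a$ and $b$ swapped (equivalently, $\epsilon < 0$ in the displayed formulas), or check the sign convention against Reardon and Firebaugh. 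In the intended direction, feasibility for small $\epsilon$ is automatic. We remove $a$ from $j$ and $b$ from $i$, and $y_{ja} > y_{ia} \ge 0$ and $y_{ib} > y_{jb} \ge 0$ give strictly positive room, so $\pop'$ stays in $\Pops_{nd\ell}$.

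In that direction,
\begin{align*}
  \Delta = \epsilon\Big[ s\big(y_{ia}, y_{ia}+\tfrac{\epsilon}{\mu_i}\big) - s\big(y_{ja}-\tfrac{\epsilon}{\mu_j}, y_{ja}\big) \Big] + \epsilon\Big[ s\big(y_{jb}, y_{jb}+\tfrac{\epsilon}{\mu_j}\big) - s\big(y_{ib}-\tfrac{\epsilon}{\mu_i}, y_{ib}\big) \Big].
\end{align*}
The key lemma is a standard fact for strictly convex $g$ on $[0,1]$: if $u_1 < v_1 \le u_2 < v_2$, then $s(u_1,v_1) < s(u_2,v_2)$. This follows from the three-chord lemma, with strictness coming from strict convexity.

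For $\epsilon$ small enough that $y_{ia} + \epsilon/\mu_i \le y_{ja} - \epsilon/\mu_j$, which is possible because $y_{ia} < y_{ja}$, the first bracket is strictly negative. The same argument with $y_{jb} < y_{ib}$ makes the second bracket strictly negative. Hence $\Delta < 0$ for all sufficiently small $\epsilon > 0$, which is the exchanges property.

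I expect the main obstacle to be bookkeeping rather than analysis. It lies in pinning down the correct sign and direction convention for the exchange, and in making "sufficiently small" explicit so that the two secant intervals in each bracket are disjoint and the new rows stay in the simplex. Separability is what makes the argument work: it decouples the $a$ and $b$ coordinates into two independent one-dimensional comparisons. For non-separable $\phi$, cross terms of the Hessian could defeat the sign.
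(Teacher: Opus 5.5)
Your sign diagnosis is correct, and the paper's own proof misses it. The paper linearizes to $\info_\phi(\pop)-\info_\phi(\pop')\simeq\epsilon\bigl(g'(y_{ia})-g'(y_{ja})+g'(y_{jb})-g'(y_{ib})\bigr)$, writing $\phi'$ for $g'$. It then asserts that both differences are positive. But $y_{ia}<y_{ja}$, $y_{ib}>y_{jb}$ and $g'$ increasing make both negative. So with the definitions as printed, the exchange raises $\info_\phi$. Simpson with $\mu_i=\mu_j=\tfrac12$, $\vy_i=(0.2,0.8)$, $\vy_j=(0.8,0.2)$ already shows this. Reversing the direction, as you do, is the needed repair.

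Beyond that, your argument is correct and differs from the paper's in technique. The paper expands $\phi$ to first order via $\nabla\phi(\vy_i)$ and $\nabla\phi(\vy_j)$, drops the remainders, and leaves ``$\epsilon$ sufficiently small'' unquantified. This presupposes differentiability at $\vy_i,\vy_j$, which fails for Shannon entropy if $y_{ia}=0$ or $y_{jb}=0$. Your secant-slope identity is exact and needs no smoothness. It also yields the explicit threshold $\epsilon(1/\mu_i+1/\mu_j)\le\min\{y_{ja}-y_{ia},\,y_{ib}-y_{jb}\}$, which additionally guarantees that $\pop'$ is a valid population. The paper's route is shorter and makes the mechanism visible: it compares the marginal values $g'$ across the two locations. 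That is the natural starting point for the non-separable case.

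You do share one tacit assumption with the paper. Your chord lemma needs $g$ itself to be strictly convex on $[0,1]$; the paper likewise invokes ``$\phi'$ strictly increasing.'' Strict convexity of $\phi=\sum_k g(y_k)$ on $\simplex{\ell}$ does not imply this.
\begin{itemize}
  \item For $\ell=2$, only $t\mapsto g(t)+g(1-t)$ is constrained, so $g(t)=t^2+c(t-\tfrac12)^3$ with $c>2/3$ qualifies while $g$ is not convex.
  \item For $\ell=3$, take $g''(t)=1/t$ except $g''=-1$ on $(0.6,0.7)$. This keeps $\phi$ strictly convex on $\simplex{3}$. Yet with $\vy_i=(0.61,0.30,0.09)$, $\vy_j=(0.69,0.29,0.02)$, $a=1$, $b=2$, your corrected exchange increases $\info_\phi$ to first order, with coefficient $0.08-\log(30/29)>0$.
\end{itemize}
So you should state explicitly that the hypothesis is read as strict convexity of $g$. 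This holds for both the Shannon and Simpson cases.
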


\Cref{thm:exchanges} implies that the Shannon mutual information and the Simpson information, both being generated by separable functions $\phi$, satisfy the exchanges property. 
Since the wealth equality function $\phi_{\mathrm{eq}}$ is not separable, \Cref{thm:exchanges} gives no such guarantee. 


\section{Spatial Analysis with the Jensen Information} \label{sec:spatial-analysis}

  Somewhat figuratively, the problem of ``spatializing'' a segregation measure is the problem of distinguishing the checkered segregation pattern of \Cref{fig:checkerboard}(c) and the monolithic segregation pattern of \Cref{fig:checkerboard}(d).
  There are several intuitions that can help us distinguish the two patterns: 
  \begin{itemize}
    \item \textbf{Perturbation}: If we imagine adding random noise by randomly moving individuals to nearby locations, checkered city (c) will tend to become integrated more rapidly than monolithic city (d). 
    \item \textbf{Scale}: The checkered city (c) has a smaller \emph{spatial scale} of segregation, in the sense that more contiguous regions are required to partition the population into homogeneous subpopulations.
    \item \textbf{Boundaries}: The checkered city (c) displays more \emph{boundaries} between regions of different demographic composition. 
  \end{itemize}
  These intuitions are interrelated, but can motivate different directions for segregation measurement. 
  All of them may be viewed as attempts to construct spatially-aware segregation measurements which are robust to the Modifiable Areal Unit Problem (MAUP) \cite{openshaw1984modifiable}.
  In this section, we survey one approach in each direction and illustrate with computational experiments. 
  In each, we apply the Shannon mutual information $I_{\phi_H}$ as our segregation measurement to demographics of American cities using race and ethnicity data from the 2024 American Community Survey (ACS), which we aggregate into the same five categories used in \Cref{fig:dot-viz}.

  \subsection{A Perturbative View: Spatial Smoothing} \label{sec:smoothing}

    \begin{figure}
      \centering 
      \includegraphics[width=0.5\textwidth]{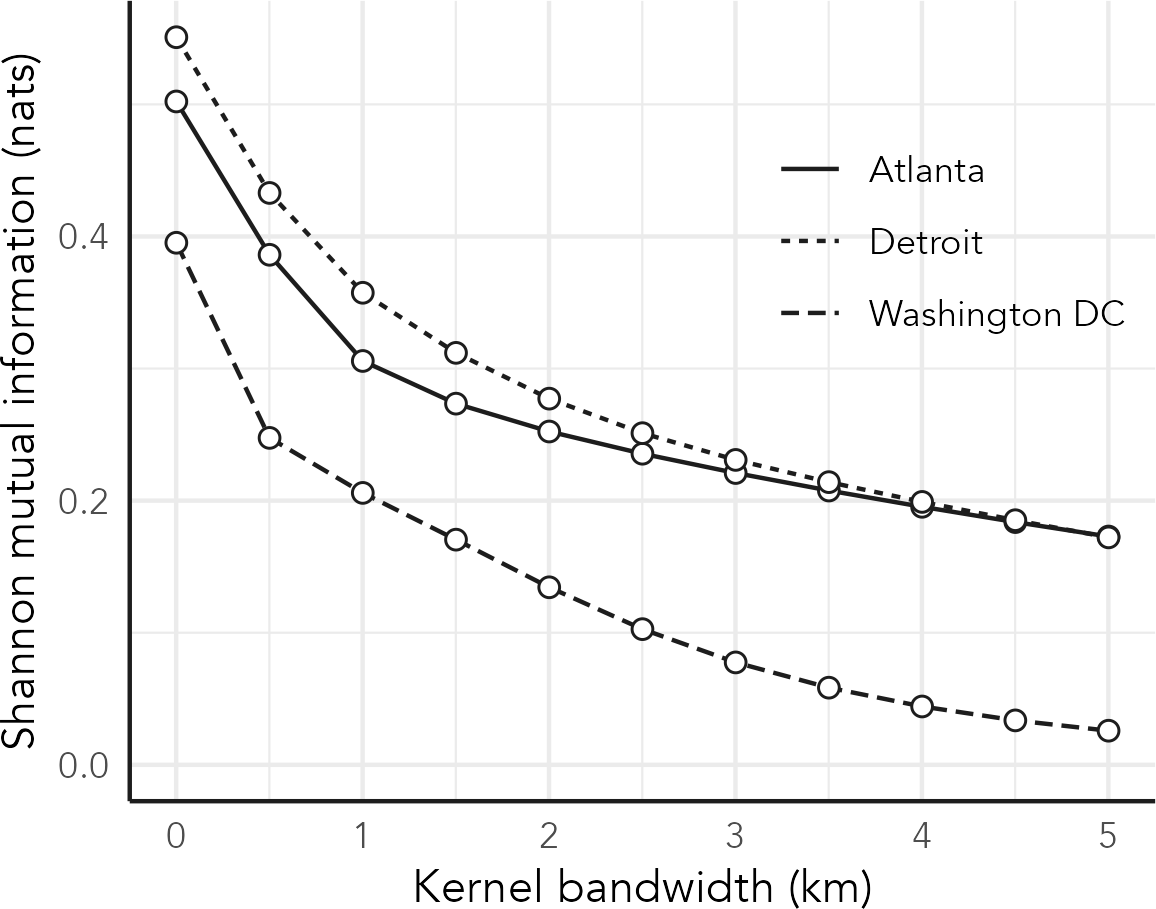}
      \caption{
        Illustration of a perturbative approach to spatial segregation measurement for Atlanta, Georgia; Detroit, Michigan; and Washington D.C.
        We show spatial profile curves for segregation under Gaussian Radial Basis (RBF) smoothing as the bandwidth $\gamma$ is varied. 
      }
      \label{fig:smoother}
    \end{figure}

    One approach to spatialization, now standard in the quantitative sociology literature \cite{reardon3MeasuresSpatial2004}, is to preprocess the population data by applying a spatial smoothing kernel. 
    This amounts to computing the composition $\info_\phi \circ \smooth_\gamma$ of the Jensen information with a smoothing operator $\smooth_\gamma: \pop \rightarrow \pop'$ parameterized by a bandwidth $\gamma$.
    In a typical setup, $\mY'$ is obtained by applying a smoothing kernel to the rows of $\mY$: for each location $i$, 
    \begin{align*}
      \vy_i' = \frac{1}{\sum_{j = 1}^n k_\gamma(\vx_i, \vx_j)} \sum_{j = 1}^n k_\gamma(\vx_i, \vx_j) \vy_j \;,
    \end{align*}
    where $k_\gamma: \R^d \times \R^d \rightarrow [0, \infty)$ is a pairwise smoothing kernel. 
    Possible choices of the smoothing kernel $k_\gamma$ include parametric kernels such as the Gaussian radial basis function $k_\gamma(\vx_i, \vx_j) = \exp \paren{-\norm{\vx_i - \vx_j}_2^2 / 2\gamma^2}$, or data driven kernels based on road network distance or commute time between locations \cite{robertoSpatialProximityConnectivity2018}. 
    These smoothing kernels may be viewed as continuum limits of a random perturbation in which we randomly move individuals from their original location to nearby locations with probability proportional to the kernel value $k_\gamma$.
    Increasing the bandwidth parameter $\gamma$ will tend to reduce the impact of the Modifiable Areal Unit Problem (MAUP) by reducing sensitivity to the original areal units, but will also tend to reduce the measured segregation.

    In their theoretical analysis of spatialized segregation measures, Reardon and O'Sullivan \cite{reardon3MeasuresSpatial2004} find that spatialized measures corresponding to Shannon and Simpson informations, again the only Jensen informations amongst their considered measures, satisfy the greatest number of their proposed criteria for spatialized segregation measures.  
    This follows largely by inheriting the favorable properties of the general class of Jensen informations. 

    \Cref{fig:smoother} illustrates the smoothing-based approach on three sample American cities: Atlanta, Georgia; Detroit, Michigan; and Washington D.C. 
    We apply Gaussian radial basis smoothing with varying bandwidth $\gamma$ prior to computing the information $\info_{\phi_H}$.  
    The results highlight the relative spatial scales of segregation in each city; whereas Atlanta and Detroit have relatively similar curves which display persistent segregation even at $\gamma = 5\;\mathrm{km}$, Washington D.C. displays a much more rapid drop in segregation as the smoothing bandwidth is increased.
    In this sense, we might say that Washington D.C. displays a ``finer'' scale of segregation than Atlanta or Detroit.
    It is important to interpret this result in context, however: this result arguably as much reflects the smaller relative size of Washington D.C. as a whole in comparison to Atlanta or Detroit.

  \subsection{A Scale-Based View: Spatial Information Decomposition} \label{sec:decomposition}

    \begin{figure*}
      \centering 
      \includegraphics[width=\textwidth]{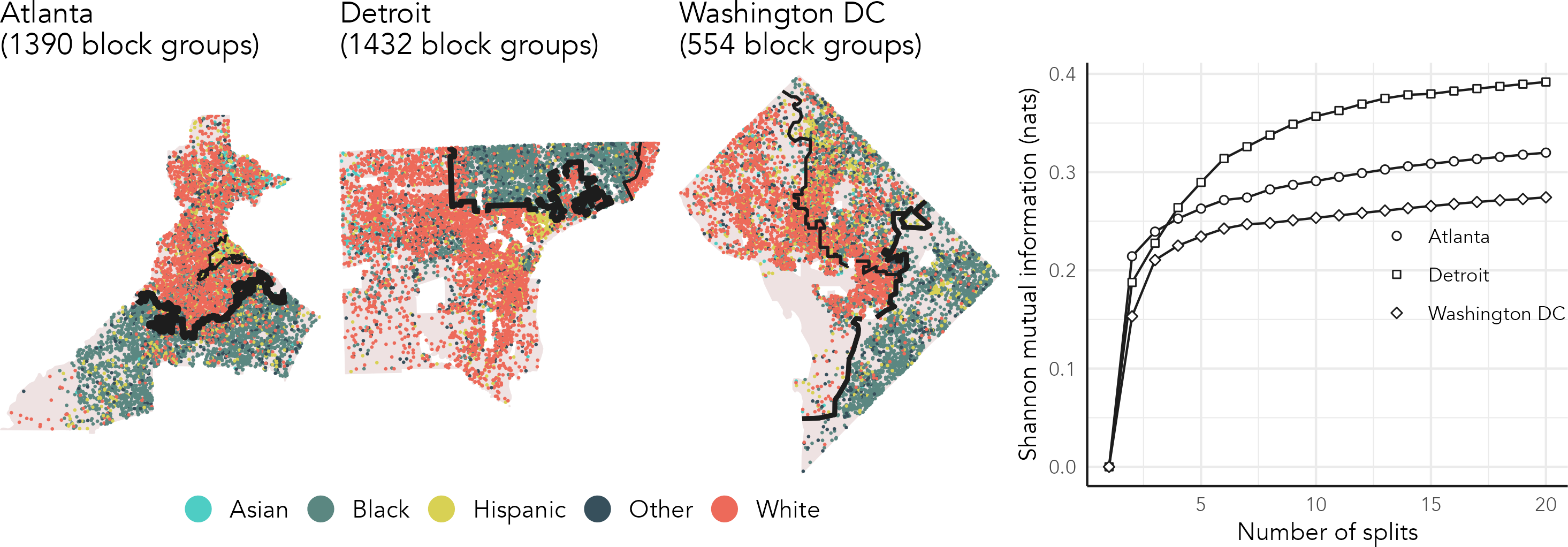}
      \caption{
        First and second learned major boundaries (heavy black lines) and information loss associated to them for each of three sample cities. 
        Each map contains 10,000 dots reflecting variable-sized units of population of the corresponding demographic category. 
      }\label{fig:aggregation}
    \end{figure*}

    Another response to the MAUP is to consider the question of how a segregation measure should change as we modify the areal units under study. 
    We cannot change the areal units arbitrarily, but we can \emph{aggregate} them into larger units. 
    The behavior of a segregation measure under aggregation can then be used to describe the spatial scale of segregation in a population.
    Returning to the grid-cities in \Cref{fig:checkerboard}, for example, it is possible to aggregate the columns of City (d) into three spatially contiguous, homogeneous, monolithic regions. 
    We can say that City (d) admits a spatially contiguous aggregation in which all segregation remains \emph{between} aggregated units and none is captured \emph{within} units. 
    No such aggregation is possible for City (c). 
    The Jensen information provides a helpful, formal decomposition into between- and within-aggregated-unit segregation.
    
  \begin{dfn}[Spatial Aggregation]
    Let $n' < n$ and let $\pi: [n] \rightarrow [n']$ be a surjective map.
    A \emph{spatial aggregation} of a population $\pop = \standardTriple \in \Pops_{nd\ell}$ by $\pi$ is a new population $\pop' = (\mX', \mY', \vmu') \in \Pops_{n'd'\ell}$ such that \footnote{The definition of $\mX'$ under aggregation is intentionally flexible; one possible choice is to set $x'_j = \frac{1}{\mu'_j} \sum_{i \in \pi^{-1}(j)} \mu_i x_i$, the $\mu$-weighted centroid of the locations in the $j$th aggregated unit.}
    \begin{align*}
      \mu'_j &= \sum_{i \in \pi^{-1}(j)} \mu_i, &\quad 
      \vy'_j &= \frac{1}{\mu'_j} \sum_{i \in \pi^{-1}(j)} \mu_i \vy_i\;. 
    \end{align*}
  \end{dfn}

  \begin{dfn}[Conditional Jensen Information]
    Let $\pop'$ be a spatial aggregation of $\pop$ by $\pi$.
    The \emph{conditional Jensen information} of $\pop$ given $\pop'$ is 
    \begin{align*}
      I_{\phi}(\pop \mid \pop') = \sum_{j = 1}^{n'} \mu_j' \underbrace{\paren{\phi(\vy_j') - \frac{1}{\mu_j'}\sum_{i \in \pi^{-1}(j)} \mu_i \phi(\vy_i)}}_{I_\phi(\pop_j)}   \;, 
    \end{align*}
    where $\pop_j$ is the subpopulation of $\pop$ corresponding to the $j$th aggregated unit.
  \end{dfn}
  The conditional Jensen information is the mean segregation \emph{within} each of the aggregated subpopulations, weighted by the population size of each.
  It follows that $I_{\phi}(\pop \mid \pop') \geq 0$ for all populations $\pop$ and spatial aggregations $\pop'$, with equality if and only if the demographic distribution is uniform within each aggregated unit.
  A direct calculation shows 
  \begin{align}
    I_\phi(\pop) = I_\phi(\pop') + I_\phi(\pop \mid \pop') \;. \label{eq:information-decomposition}
  \end{align}
  \Cref{eq:information-decomposition} is a general decomposition into ``between aggregated unit'' segregation (first term) and ``within aggregated unit'' segregation (second term).
  
  The behavior of the Jensen information under spatial aggregation offers us another route towards addressing the modifiable areal unit problem (MAUP). 
  To use it, however, we must address two challenges.
  First, we remain at the mercy of the highest-resolution data available. 
  Second and more fundamentally, the choice of aggregation $\pi$ is often \emph{itself} arbitrary, frequently based on nested administrative boundaries such as Census block groups, tracts, and counties.
  One approach to obtaining non-arbitrary aggregations is to seek spatially sensible aggregations which maximize the first, between-unit segregation term in \cref{eq:information-decomposition} while minimizing the second. 
  The task of forming such spatial aggregations is often called \emph{regionalization}; Alec Kirkley offers a helpful recent overview as well as a novel method \cite{kirkleySpatialRegionalizationBased2022}. 
  In my previous work, I proposed a simple agglomerative hierarchical clustering algorithm which greedily combines adjacent areal units which least decrease the between-unit segregation term in \cref{eq:information-decomposition} \cite{chodrowStructureInformationSpatial2017}.
  The resulting algorithm has the interpretive benefit that boundaries between aggregated areal units can be assigned information content associated with their contribution to the between-unit segregation term. 
  
  \Cref{fig:aggregation} illustrates this algorithm on our three sample cities, again using the Shannon mutual information $\info_{\phi_H}$ as the segregation measure.
  In each, the thickness of the border lines is proportional to their information content. 
  We also show a second kind of profile curve, which this time measures the amount of information contained in the aggregation as we increase the number of aggregated units.
  Coarse-structured segregation reminiscent of \Cref{fig:checkerboard}(d) is reflected by high information content expressed using a very small number of boundaries. 
  Care is required in interpreting these boundaries considering that the greedy hierarchical clustering algorithm used to create them has no optimality guarantees. 
  An important difference between this style of analysis and the smoothing profile curves of \Cref{fig:smoother} is that the aggregation-based analysis is relatively insensitive to the geographic size of the region analyzed; this may be a virtue or a drawback depending on analytic context. 
  
  \subsection{A Boundary-Based View: Local Information}
  
    A third approach to spatializing segregation measurements follows the intuition that the ``checkerboard'' pattern of \Cref{fig:checkerboard}(c) has more \emph{boundaries} or \emph{transitions} between demographically contrasting regions than the monolithic pattern of \Cref{fig:checkerboard}(d).
    To formalize this idea, it is convenient to model the observed demographic distribution via a differentiable function $\Upsilon: \cR \rightarrow \simplex{\ell}$, where an open set $\cR \subset \R^d$ represents continuous geographic space and $\vy = \Upsilon(\vx)$ gives the demographic distribution at location $\vx \in \cR$.
    This modeling assumption is arguably unrealistic given the presence of sharp boundaries in real-world settled environments, but can be enforced in data analysis through the application of a differentiable smoothing kernel to the demographic data as described in \Cref{sec:smoothing}.
    Under this assumption, we can describe the sensitivity of demographics $\Upsilon$ to space through the Jacobian $\jacobian{\Upsilon}(\vx) \in \R^{\ell \times d}$.
    One could in principle define the ``boundariness'' of $\Upsilon$ at a point $\vx \in \relint{\cR}$ via a measure of the magnitude of the Jacobian such as $b(\vx) = \norm{\jacobian{\Upsilon}(\vx)}_1 \eqdef \sum_{i=1}^{\ell} \sum_{j=1}^d \abs{\partial \Upsilon_i / \partial x_j}$. 
    An aggregate measure of ``boundariness''  over the entire region $\cR$ is then given by the total variation $B(\cR) = \int_{\cR} b(\vx) d\vx$.
    This straightforward approach has an important limitation: it is insensitive to the choice of diversity function $\phi$ and Jensen information $I_\phi$ used in analysis. 

    We can obtain a more flexible class of measurements by considering the behavior of the Jensen information $I_\phi$ within a small neighborhood about a point $\vx \in \relint{\cR}$. 
    We start by defining a continuous Jensen information: 
    \begin{align}
      I_\phi(\Upsilon, \cR) \eqdef \int_{\cR} \brackets{\phi\paren{\Upsilon(\vx)} - \phi\paren{\bar{\vy}(\cR)}} \; d\mu(\vx) \;,
    \end{align}
    where $\bar{\vy}(\cR) = \int_{\cR} \Upsilon(\vx) \; d\mu(\vx)$ is the mean demographic distribution over the region $\cR$.
    \begin{dfn} \label{dfn:local-info}
      The \emph{local information} of $\Upsilon$ at a point $\vx \in \relint{\cR}$ is 
      \begin{align}
        j_\phi(\Upsilon, \vx) \eqdef \lim_{\epsilon \rightarrow 0} \frac{I_\phi(\Upsilon, B_\epsilon(\vx))}{\epsilon^2}\;,  \label{eq:local-info-def}
      \end{align}
      where $B_\epsilon(\vx) \eqdef \{\vy \in \cR : \norm{\vy - \vx}_2 < \epsilon\}$ is the open ball of radius $\epsilon$ centered at $\vx$.
    \end{dfn}
    Intuitively, the local information should be near 0 in a small region of near-constant demographic composition, and should be large in a small region where the demographic composition changes rapidly.
    We can either calculate the local information at many points for use in visualization, or integrate the local information over the entire region to obtain an aggregate spatialized measure, the \emph{mean local information}: 
    \begin{align}
      J_\phi(\Upsilon, \cR) \eqdef \int_{\cR} j_\phi(\Upsilon, \vx) \;d\mu(\vx) \;.  \label{eq:mean-local-info}
    \end{align}
    Unlike the total variation measure $B(\cR)$ described above, the mean local information $J_\phi(\Upsilon, \cR)$ is sensitive to the choice of diversity function $\phi$ used in analysis.
    To use \cref{eq:mean-local-info}, however, we must know that the limit in \cref{eq:local-info-def} exists and must be able to approximate it in a computationally tractable way.
    \begin{thm}[\cite{chodrowStructureInformationSpatial2017}] \label{thm:local-info}
      Suppose that $\Upsilon: \cR \rightarrow \simplex{\ell}$ is differentiable on $\relint{\cR}$. 
      Fix $\vx \in \cR$ and let $\vy = \Upsilon(\vx)$. 
      Then, the limit in \cref{eq:local-info-def} exists and is given by 
      \begin{align}
        j_\phi(\Upsilon, \vx) = \frac{1}{8\pi} \trace\paren{\jacobian{\Upsilon}(\vx)^{\top} \hess{\phi}(\vy)  \jacobian{\Upsilon}(\vx)} \;.  \label{eq:local-info-trace}
      \end{align}
    \end{thm}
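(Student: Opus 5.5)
The plan is to reduce the continuous Jensen information on a small ball to an average of Bregman divergences centred at the fixed point $\vy=\Upsilon(\vx)$. I would then apply the local quadratic expansion \cref{eq:local-quadratic} and compute a second moment of the ball. Throughout I take $d=2$, as in all the paper's examples; this is what produces a $\pi$ in the constant. I read $d\mu$ restricted to $B_\epsilon(\vx)$ with the normalization under which $\bar{\vy}(B_\epsilon(\vx))$ is a genuine average, and I track the resulting factor explicitly.

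\textbf{Step 1: centre the divergences at $\vy$.} Integrating the ``if'' direction of the Bregman characterization (equivalently, the direct calculation behind \cref{eq:information-equivalence}) gives $I_\phi(\Upsilon,B_\epsilon)=\int_{B_\epsilon}\divergence{\phi}{\Upsilon(\vz)}{\bar{\vy}}\,d\mu(\vz)$. Expanding about the moving point $\bar{\vy}$ is awkward, so I would use the compensation identity instead:
\begin{align*}
  \int \divergence{\phi}{\Upsilon(\vz)}{\bar{\vy}}\,d\mu(\vz) = \int \divergence{\phi}{\Upsilon(\vz)}{\vy}\,d\mu(\vz) - \mu(B_\epsilon)\,\divergence{\phi}{\bar{\vy}}{\vy}.
\end{align*}
This follows from \cref{eq:bregman-divergence} by cancelling the linear terms, using that $\bar{\vy}$ is the $\mu$-mean.

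\textbf{Step 2: expand both terms to second order.} Differentiability of $\Upsilon$ at $\vx$ gives $\Upsilon(\vz)=\vy+\jacobian{\Upsilon}(\vx)(\vz-\vx)+o(\epsilon)$ uniformly on $B_\epsilon$. Substituting into \cref{eq:local-quadratic}, which applies because the increments lie in the tangent space of the simplex, gives
\begin{align*}
  \divergence{\phi}{\Upsilon(\vz)}{\vy}=\tfrac12 (\vz-\vx)^\top \jacobian{\Upsilon}(\vx)^\top \hess{\phi}(\vy)\jacobian{\Upsilon}(\vx)(\vz-\vx)+o(\epsilon^2).
\end{align*}
Because the ball is symmetric about $\vx$, the linear term averages to zero. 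Hence $\bar{\vy}-\vy=o(\epsilon)$, and therefore $\divergence{\phi}{\bar{\vy}}{\vy}=o(\epsilon^2)$, so the compensation term drops out at order $\epsilon^2$.

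\textbf{Step 3: compute the second moment of the disk.} The leading term is $\tfrac12\trace\paren{\jacobian{\Upsilon}^\top\hess{\phi}(\vy)\jacobian{\Upsilon}\,M_\epsilon}$, where $M_\epsilon=\int_{B_\epsilon}(\vz-\vx)(\vz-\vx)^\top d\mu$. By rotational symmetry $M_\epsilon$ is a multiple of the identity, with coefficient $\tfrac12\int_{B_\epsilon}\norm{\vz-\vx}^2 d\mu$. I would compute this coefficient in polar coordinates: with the uniform probability normalization on the disk it equals $\epsilon^2/4$. Dividing by $\epsilon^2$ and letting $\epsilon\to0$ then yields a constant times the trace in \cref{eq:local-info-trace}. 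I would reconcile that constant with $1/(8\pi)$ by fixing the paper's convention for $\mu$ on the ball (uniform probability measure versus an area-based normalization). This is the one place where I expect bookkeeping rather than ideas.

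\textbf{Main obstacle.} The delicate point is uniformity of the error terms, because $\Upsilon$ is only assumed differentiable at $\vx$. The $o(\norm{\vdelta}^2)$ in \cref{eq:local-quadratic} must be applied along increments that are themselves only $O(\epsilon)$ with $o(\epsilon)$ error. This requires $\phi$ to be twice differentiable at $\vy$ in the interior of $\simplex{\ell}$, which I would add as a standing assumption. Given that, the composition of the two little-$o$ estimates is controlled by the bound $\sup_{\vz\in B_\epsilon}\norm{\Upsilon(\vz)-\vy}=O(\epsilon)$. The degenerate case in which $\vy$ lies on the boundary of the simplex, where the Hessian may blow up (as for Shannon entropy), would be excluded or treated separately.
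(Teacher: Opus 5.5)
Your route is the one the paper indicates: write $I_\phi$ on the ball as an average of Bregman divergences, recentre at $\vy$ via the compensation identity, apply \eqref{eq:local-quadratic}, and compute the second moment of the disk. These steps are sound, including your handling of the composed little-$o$ terms. Two of your conclusions depend on how $\mu$ is normalized on $B_\epsilon(\vx)$: that $\divergence{\phi}{\bar{\vy}}{\vy}$ vanishes at order $\epsilon^2$, and that $M_\epsilon=\frac{\epsilon^2}{4}\,\mathrm{Id}$. Both require the normalized measure to be uniform, or to have a density continuous and positive at $\vx$, so state that as a hypothesis. For a general $\mu$ the barycentre can move by order $\epsilon$, and the limit changes.

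The genuine gap is the constant. You defer it as bookkeeping, but it cannot be fixed the way you propose. For the uniform probability measure on the ball, your own computation gives $I_\phi(\Upsilon,B_\epsilon(\vx))=\frac{\epsilon^2}{8}\,\mathrm{trace}\paren{\jacobian{\Upsilon}(\vx)^\top\hess{\phi}(\vy)\jacobian{\Upsilon}(\vx)}+o(\epsilon^2)$. So the limit in \eqref{eq:local-info-def} is $\frac18$ times the trace, not $\frac{1}{8\pi}$ times it. A quick check: for $\phi_S$ (Hessian $2\,\mathrm{Id}$) and affine $\Upsilon$, one has exactly $I_{\phi_S}(\Upsilon,B_\epsilon(\vx))=\frac{\epsilon^2}{4}\,\mathrm{trace}\paren{\jacobian{\Upsilon}^\top\jacobian{\Upsilon}}$.

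Switching to area measure for the outer integral does not help. It multiplies everything by $\pi\epsilon^2$, so the ratio with $\epsilon^2$ tends to $0$, and an unnormalized $\bar{\vy}$ does not even lie in $\simplex{\ell}$. The factor $\frac{1}{8\pi}$ arises exactly when the probability-normalized information is divided by the area $\abs{B_\epsilon(\vx)}=\pi\epsilon^2$ rather than by $\epsilon^2$. That is, $j_\phi$ is read as information per unit area, which is also the reading under which integrating $j_\phi$ to get $J_\phi$ is natural.

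So the stated constant and \eqref{eq:local-info-def} use different normalizations. Your proof should say which denominator it adopts and derive the matching constant ($\frac18$ or $\frac{1}{8\pi}$), rather than promise to reconcile them by a choice of $\mu$.
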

    
    \begin{figure}
      \includegraphics[width=0.45\textwidth]{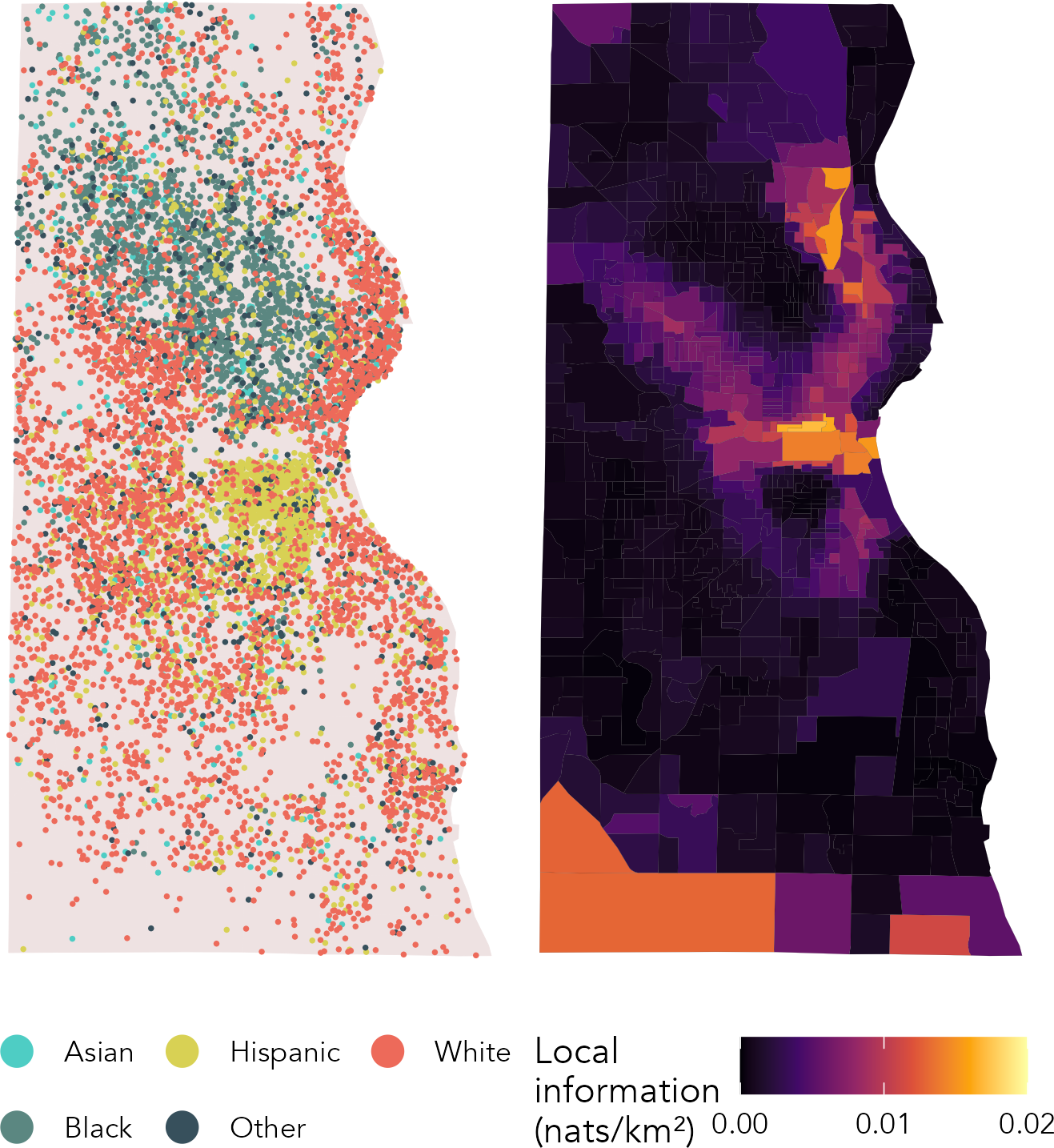}
      \caption{Local information in Milwaukee, Wisconsin. A smoothing kernel of bandwidth 1km was applied to the original population data, and the local information was then computed at each block group using \cref{eq:local-info-trace} with local linear regression to estimate the spatial derivatives of $\Upsilon$.
      The map contains 10,000 dots corresponding to equal units of population within each demographic category and block group.}
      \label{fig:local-info}
    \end{figure}
    
    The proof relies on the local quadratic property of Bregman divergences \eqref{eq:local-quadratic}. 
    The local information \eqref{eq:local-info-trace}  captures the dependence of a spatial measure on \emph{both} the spatial distribution of the population $\Upsilon$ and the choice of diversity function $\phi$.
    To compute the local information in practice, we must approximate the spatial derivatives of $\Upsilon$. 
    I previously performed this task via smoothing followed by local least-squares approximation \cite{chodrowStructureInformationSpatial2017}, but other estimation methods may prove preferable.

    \Cref{fig:local-info} illustrates the local information racial segregation in Milwaukee, Wisconsin, again using the Shannon mutual information $I_{\phi_H}$ as the segregation measure.
    The local information is large along boundaries where the demographic distribution changes rapidly, and is small in the interior of regions of relatively homogeneous demographic composition. 
    In previous work, I found that the mean local information can be used in tandem with the Jensen information as a two-dimensional description of the structure of spatial segregation in a population \cite{chodrowStructureInformationSpatial2017}.

\section{Discussion} \label{sec:discussion}

  One of my primary aims in this article has been to generalize away from particular measures of diversity and segregation and to instead highlight properties and methods that can be obtained for broad classes of measures. 
  To apply the spatial methods described in the previous section, for example, one requires only a choice of strictly convex function $\phi$ that encodes an intuition about a relevant sense of ``diversity.'' 
  To once more highlight the advantage of this general approach, we note that \emph{ordinal} segregation measurement (with respect to variables such as wealth, age, or educational attainment), often treated as an entirely distinct problem in the sociological literature, can be approached using the same toolbox simply with the choice of an appropriate convex $\phi$ such as $\phi_{\mathrm{eq}}$ described in \Cref{example:equality}.
  All properties and analyses described previously are available for this measure except for those which depend on symmetry or separability of $\phi$ (e.g., \Cref{thm:exchanges}).
  I hope that this combination of data-analytic flexibility with a unified theoretical foundation will be useful in future work on segregation measurement. 
  
  A relatively common framing in much of the segregation measurement literature is that one desires to find ``the best'' measure of segregation. 
  I suggest that this framing is somewhat misguided: if one desires to clearly distinguish each of the panels (a-d) of \Cref{fig:checkerboard} from each other, for example, at least two different measurements are needed. 
  One feature of the framework presented here is that one can construct multiple measurements with principled conceptual relationships to each other. 
  Once we formulate a core intuition about what we wish to mean by ``diversity'' in a given setting, we can express this intuition via a strictly convex function $\phi$ and from this derive measurements of diversity, segregation, and spatial segregation. 
  I suggest that we view these as \emph{families} of related measurements, which complement each other in data analysis. 

  \subsection*{Future Directions}

    There are many approaches to the measurement of spatial segregation---and therefore directions of future work---which fall outside of the framework developed here. 
    Even within this framework, however, there are several interesting directions of future work. 
    One is in the design of ordinal diversity functions along the lines of $\phi_{\mathrm{eq}}$ described above; a systematic comparison of alternatives would be of interest. 
    Another direction concerns the measurement of diversity and segregation in the case of more complex joint distributions over multiple demographic categories, such as race and income. 

    Algorithmically, the demographic regionalization problem remains an area of active development. 
    The interpretability of boundaries supplied by the agglomerative hierarchical clustering algorithm of \cite{chodrowStructureInformationSpatial2017} is helpful, but the high-level boundaries are quite sensitive to small perturbations of the input data. 
    Developing alternative hierarchical methods with better properties at the macroscopic level, perhaps via divisive rather than agglomerative clustering, would be a considerable advance. 

    Finally, the \emph{dynamics} of segregation---how the segregation structure of a population changes over time---remains an interesting and challenging area of study. 
    Although several studies (e.g. \cite{hwangRacializedReshufflingUrban2022}) track changes in the value of segregation measures over time, this author is aware of relatively few works connecting mechanistic or agent-based modeling of segregation with empirical data; one notable exception is \cite{benensonSchellingSpatiallyExplicit2009}. 
    Further work towards measuring and modeling the dynamics of spatial segregation would be of considerable interest.

\bibliography{refs}

@article{chodrowEquivalenceInformationsCharacterizes2025b,
  title = {Equivalence of Informations Characterizes {{Bregman}} Divergences},
  author = {Chodrow, Philip S.},
  year = 2025,
  journal = {Entropy},
  volume = {27},
  number = {7},
  pages = {766},
  langid = {english}
}

@article{chodrowStructureInformationSpatial2017,
  title = {Structure and Information in Spatial Segregation},
  author = {Chodrow, Philip S.},
  year = 2017,
  month = oct,
  journal = {Proceedings of the National Academy of Sciences},
  volume = {114},
  number = {44},
  pages = {11591--11596},
  issn = {0027-8424, 1091-6490},
  urldate = {2023-11-01},
  langid = {english}
}

@article{banerjeeClusteringBregmanDivergences2005,
  title = {Clustering with {{Bregman}} Divergences},
  author = {Banerjee, Arindam and Merugu, Srujana and Dhillon, Inderjit and Ghosh, Joydeep},
  year = 2005,
  month = apr,
  journal = {Journal of Machine Learning Research},
  volume = {6},
  pages = {1705--1749},
  urldate = {2025-06-23}
}

@article{reardonMeasuresMultigroupSegregation2002,
  title = {Measures of Multigroup Segregation},
  author = {Reardon, Sean F. and Firebaugh, Glenn},
  year = 2002,
  month = aug,
  journal = {Sociological Methodology},
  volume = {32},
  number = {1},
  pages = {33--67},
  issn = {0081-1750, 1467-9531},
  doi = {10.1111/1467-9531.00110},
  urldate = {2024-07-28},
  copyright = {http://journals.sagepub.com/page/policies/text-and-data-mining-license},
  langid = {english}
}

@article{duchin2023measuring,
  title = {Measuring Segregation via Analysis on Graphs},
  author = {Duchin, Moon and Murphy, James M and Weighill, Thomas},
  year = 2023,
  journal = {SIAM Journal on Matrix Analysis and Applications},
  volume = {44},
  number = {1},
  pages = {80--105},
  publisher = {SIAM}
}

@article{frankelMeasuringSchoolSegregation2011,
  title = {Measuring School Segregation},
  author = {Frankel, David M. and Volij, Oscar},
  year = 2011,
  month = jan,
  journal = {Journal of Economic Theory},
  volume = {146},
  number = {1},
  pages = {1--38},
  issn = {00220531},
  doi = {10.1016/j.jet.2010.10.008},
  urldate = {2024-07-28},
  copyright = {https://www.elsevier.com/tdm/userlicense/1.0/},
  langid = {english}
}

@article{openshaw1984modifiable,
  title = {The Modifiable Areal Unit Problem},
  author = {Openshaw, Stan},
  year = 1984,
  journal = {Concepts and Techniques in Modern Geography},
  publisher = {GeoBooks}
}

@misc{hoogstraQuantitativeMetricsTrait2025,
  title = {Quantitative Metrics for Trait and Identity Distributions},
  author = {Hoogstra, Leah and Slyman, Katherine and Sandstede, Bjorn},
  year = 2025,
  month = aug,
  number = {arXiv:2509.14237},
  eprint = {2509.14237},
  primaryclass = {math.GM},
  publisher = {arXiv},
  doi = {10.48550/arXiv.2509.14237},
  urldate = {2026-06-11},
  archiveprefix = {arXiv}
}

@article{topaz2025diversity,
  title = {Diversity, Identity, and Data},
  author = {Topaz, Chad M and Brooks, Heather Z and Kan, Unchitta and Sandstede, Bj{\"o}rn and Smith, Christian Michael},
  year = 2025,
  journal = {The American Mathematical Monthly},
  volume = {132},
  number = {1},
  pages = {36--47},
  publisher = {Taylor \& Francis}
}

@book{pielou1977mathematical,
  title={Mathematical Ecology},
  author={Pielou, Evelyn C},
  volume=385,
  year=1977,
  publisher={Wiley New York}
}

@article{duncanMethodologicalAnalysisSegregation1955,
  title = {A Methodological Analysis of Segregation Indexes},
  author = {Duncan, Otis Dudley and Duncan, Beverly},
  year = 1955,
  month = apr,
  journal = {American Sociological Review},
  volume = {20},
  number = {2},
  eprint = {2088328},
  eprinttype = {jstor},
  pages = {210},
  issn = {00031224},
  doi = {10.2307/2088328},
  urldate = {2025-01-08}
}

@article{chakrabarti2005shannon,
  title={Shannon entropy: axiomatic characterization and application},
  author={Chakrabarti, CG and Chakrabarty, Indranil},
  journal={International Journal of Mathematics and Mathematical Sciences},
  volume={2005},
  number={17},
  pages={2847--2854},
  year={2005},
  publisher={Wiley Online Library}
}

@article{reardon3MeasuresSpatial2004,
  title = {Measures of Spatial Segregation},
  author = {Reardon, Sean F. and O'Sullivan, David},
  year = 2004,
  month = dec,
  journal = {Sociological Methodology},
  volume = {34},
  number = {1},
  pages = {121--162},
  issn = {0081-1750, 1467-9531},
  doi = {10.1111/j.0081-1750.2004.00150.x},
  urldate = {2025-01-09},
  copyright = {https://journals.sagepub.com/page/policies/text-and-data-mining-license},
  langid = {english}
}

@article{robertoSpatialProximityConnectivity2018,
  title = {The Spatial Proximity and Connectivity Method for Measuring and Analyzing Residential Segregation},
  author = {Roberto, Elizabeth},
  year = 2018,
  month = aug,
  journal = {Sociological Methodology},
  volume = {48},
  number = {1},
  pages = {182--224},
  issn = {0081-1750, 1467-9531},
  doi = {10.1177/0081175018796871},
  urldate = {2024-10-22},
  langid = {english}
}

@article{kirkleySpatialRegionalizationBased2022,
  title = {Spatial Regionalization Based on Optimal Information Compression},
  author = {Kirkley, Alec},
  year = 2022,
  month = oct,
  journal = {Communications Physics},
  volume = {5},
  number = {1},
  pages = {249},
  issn = {2399-3650},
  doi = {10.1038/s42005-022-01029-4},
  urldate = {2026-08-14},
  langid = {english}
}

@book{boydConvexOptimization2023,
  title = {Convex Optimization},
  author = {Boyd, Stephen P. and Vandenberghe, Lieven},
  year = 2023,
  edition = {Version 29},
  publisher = {Cambridge University Press},
  address = {Cambridge, UK},
  isbn = {978-0-521-83378-3},
  langid = {english}
}

@incollection{reardonMeasuresOrdinalSegregation2009,
  title = {Measures of Ordinal Segregation},
  booktitle = {Research on {{Economic Inequality}}},
  author = {Reardon, Sean F.},
  editor = {Fl{\"u}ckiger, Yves and Reardon, Sean F. and Silber, Jacques},
  year = 2009,
  month = jan,
  volume = {17},
  pages = {129--155},
  publisher = {Emerald Group Publishing Limited},
  doi = {10.1108/S1049-2585(2009)0000017011},
  urldate = {2024-07-28},
  copyright = {http://www.emeraldinsight.com/page/tdm},
  isbn = {978-1-84855-786-4 978-1-84855-787-1},
  langid = {english}
}

@article{hwangRacializedReshufflingUrban2022,
  title = {Racialized Reshuffling: Urban Change and the Persistence of Segregation in the Twenty-First Century},
  shorttitle = {Racialized {{Reshuffling}}},
  author = {Hwang, Jackelyn and McDaniel, Tyler W.},
  year = 2022,
  month = jul,
  journal = {Annual Review of Sociology},
  volume = {48},
  number = {1},
  pages = {397--419},
  issn = {0360-0572, 1545-2115},
  doi = {10.1146/annurev-soc-030420-014126},
  urldate = {2026-09-09},
  langid = {english}
}

@article{benensonSchellingSpatiallyExplicit2009,
  title = {From Schelling to Spatially Explicit Modeling of Urban Ethnic and Economic Residential Dynamics},
  author = {Benenson, Itzhak and Hatna, Erez and Or, Ehud},
  year = 2009,
  month = may,
  journal = {Sociological Methods \& Research},
  volume = {37},
  number = {4},
  pages = {463--497},
  issn = {0049-1241, 1552-8294},
  doi = {10.1177/0049124109334792},
  urldate = {2026-09-09},
  copyright = {https://journals.sagepub.com/page/policies/text-and-data-mining-license},
  langid = {english}
}

@book{cover2012elements,
  title = {Elements of Information Theory},
  author = {Cover, Thomas M and Thomas, Joy A},
  year = 2012,
  publisher = {John Wiley \& Sons}
}

\appendix

\section{Pielou's Third Diversity Axiom} \label{sec:pielou-third-axiom}

The last of Pielou's three axioms relates to the behavior of a diversity measure when multiple, independent demographic attributes are considered simultaneously.

\textbf{Additivity under independent attributes.} 
``\emph{In a multiple classification (two or more qualitative variates) where the classifications are independent, the total diversity should equal the sum of the respective diversities.}'' {\cite{pielou1977mathematical}}.  
This statement can be formalized as follows, interpreting ``independent'' in the sense of probability distributions: suppose that, for a given population $\pop = \standardTriple \in \Pops_{nd\ell}$, there exist two other populations $\pop_1 = (\mX, \mU \in \R^{n\times \ell_1}, \mu)$ and $\pop_2 = (\mX, \mV \in \R^{n\times \ell_2}, \mu)$ such that $\ell = \ell_1\ell_2$ and a bijection $\alpha: [\ell_1] \times [\ell_2] \rightarrow [\ell]$ such that $y_{ij} = u_{i\alpha_1(j)} v_{i\alpha_2(j)}$ for all $i \in [n]$ and $j \in [\ell]$. 
Then $f(\pop) = f(\pop_1) + f(\pop_2)$.

This axiom, in combination with Pielou's other two and under the mild assumption of separability and continuity, is sufficient to uniquely characterize the Shannon entropy: 
\begin{thm}
    If a function $\psi: \simplex{\ell} \rightarrow \R$ is strictly concave, continuous\footnote{Convex functions are continuous on the interior of their domain, so this stipulation requires only that they are also continuous at the boundary.}, separable,  and satisfies the additivity under independent attributes property, then  
    \begin{align}
      \psi(\vy) = -c\sum_{j = 1}^{\ell} y_j \log y_j \; \label{eq:shannon-entropy}
    \end{align}
    for some constant $c > 0$.
\end{thm}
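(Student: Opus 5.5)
The plan is to turn the additivity axiom into a functional equation for the single-coordinate function $g$ in $\psi(\vy) = \sum_j g(y_j)$, and then solve that equation. Take $n = 1$, so that $\bar{\vy} = \vy_1$. I read the axiom as holding for the \emph{same} $g$ across all numbers of categories $\ell_1, \ell_2, \ell_1\ell_2$; otherwise the statement has no content. For all $\mathbf{u} \in \simplex{\ell_1}$ and $\mathbf{v} \in \simplex{\ell_2}$ it then reads
\begin{align*}
  \sum_{a=1}^{\ell_1}\sum_{b=1}^{\ell_2} g(u_a v_b) = \sum_{a=1}^{\ell_1} g(u_a) + \sum_{b=1}^{\ell_2} g(v_b)\;.
\end{align*}
First I extract boundary values. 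Taking $\ell_1 = 1$, $\mathbf{u} = (1)$ gives $g(1) = 0$. Taking $\mathbf{u} = (1,0)$ then gives $\ell_2\, g(0) = g(0) + g(1)$, so $g(0) = 0$ once $\ell_2 \geq 2$.

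The main step is to differentiate. Since $\psi$ is concave, $g$ is concave on $(0,1)$. I will first assume $g \in C^2(0,1)$; afterwards I will remove this by mollifying, convolving $g$ with a smooth kernel in the multiplicative variable $\log t$. The identity above is preserved by multiplicative convolution because $u_a v_b$ is a product, and letting the kernel shrink recovers $g$ by continuity. Now fix $\mathbf{v}$ with all $v_b > 0$, and let $\mathbf{u}$ lie in the relative interior. Move mass $\delta$ from coordinate $a'$ to coordinate $a$ and differentiate at $\delta = 0$. This gives
\begin{align*}
  \Phi_{\mathbf{v}}(u_a) = \Phi_{\mathbf{v}}(u_{a'}), \qquad \Phi_{\mathbf{v}}(x) \eqdef \sum_b v_b\, g'(x v_b) - g'(x)\;.
\end{align*}
Since $u_a$ and $u_{a'}$ range freely over small intervals (take $\ell_1 \geq 3$), $\Phi_{\mathbf{v}}$ is constant on $(0,1)$. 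Differentiating once more and setting $m(x) \eqdef x\, g''(x)$ yields
\begin{align*}
  \sum_b v_b\, m(x v_b) = m(x)\;.
\end{align*}
With $\mathbf{v} = \ve_{[k]}/k$ this says $m(x/k) = m(x)$ for all integers $k$. Hence $m(p x/q) = m(x)$ for all rationals $p/q$ keeping us in $(0,1)$, and continuity of $m$ forces $m \equiv -c$. Concavity gives $c \geq 0$, and strict concavity gives $c > 0$.

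Integrating twice, $g(t) = -c\, t\log t + \alpha t + \beta$ on $(0,1)$, and by continuity the same formula holds on $[0,1]$. The condition $g(0) = 0$ forces $\beta = 0$, and $g(1) = 0$ forces $\alpha = 0$. Therefore $\psi(\vy) = -c\sum_j y_j \log y_j$, which is \cref{eq:shannon-entropy}.

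I expect the main obstacle to be the regularity step: justifying the differentiations when only continuity and concavity are given. Concave functions are differentiable off a countable set and twice differentiable almost everywhere (Alexandrov), which may suffice directly. If not, I would fall back on the multiplicative mollification described above and verify that the mollified $g$ still satisfies both the functional equation and the boundary conditions. A secondary subtlety is interpreting the axiom consistently across different $\ell$; I will make the standing assumption of a common $g$ explicit.
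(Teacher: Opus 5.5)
\textbf{Verdict: genuine gap.} The paper gives no argument of its own here; it defers to Chakrabarti and Chakrabarty. So the benchmark is the standard treatment of the sum-form equation $\sum_{a,b}g(u_av_b)=\sum_a g(u_a)+\sum_b g(v_b)$. Your reduction to that equation, the boundary values $g(0)=g(1)=0$, and your computation \emph{assuming} $g\in C^2(0,1)$ are correct.

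The gap is the regularity step, on which the whole argument rests. The hypotheses give only continuity, and neither remedy you propose supplies two continuous derivatives.

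\textbf{Multiplicative mollification fails.} It does not preserve the identity. We have $\sum_{a,b}g_\eta(u_av_b)=\int\sum_{a,b}g(e^{s}u_av_b)\,\rho_\eta(s)\,ds$, and $e^{s}\mathbf{u}$ is not a probability vector for $s\neq0$, so the axiom says nothing about the integrand. It fails even for the true solution: $g(t)=-t\log t$ gives $g_\eta(t)=-\alpha t\log t-\kappa t$ with $\kappa=\int s e^{s}\rho_\eta(s)\,ds>0$ for a symmetric kernel. The linear term then contributes $-\kappa$ to the left side but $-2\kappa$ to the right.

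\textbf{The Alexandrov fallback fails.} It presupposes that $g$ is concave, which does not follow from concavity of $\psi$. Concavity of $\psi$ only controls $\sum_j g''(y_j)\delta_j^2$ on $\sum_j\delta_j=0$. Since at most one coordinate exceeds $\tfrac12$, $g$ can be convex near $t=1$ while every $\psi_\ell$ is strictly concave. For example, take $g''(t)=-1/t$ on $(0,\tfrac12]$, rising linearly to $+1$ at $t=1$, and use $\delta_1^2\le(1-y_1)\sum_{j\ge2}\delta_j^2/y_j$ when $y_1>\tfrac56$. Even for concave $g$, almost-everywhere second derivatives would not give the \emph{continuous} $m$ that your density argument needs. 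For the same reason, you should get $c>0$ from strict concavity of $\psi$ once its form is known, not from concavity of $g$.

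\textbf{Repair.} Carry out your first differentiation in integrated form, which needs only continuity. For fixed $\mathbf{v}$ let $h(x)\eqdef\sum_b g(xv_b)-g(x)$. The axiom says $\sum_a h(u_a)$ is constant on $\simplex{\ell_1}$. Comparing $\mathbf{u}=(x,y,1-x-y,0,\dots)$ with $(x+y,0,1-x-y,0,\dots)$ gives $h(x)+h(y)=h(x+y)+h(0)$, so $h$ is affine by Cauchy's equation and continuity. Using $g(0)=g(1)=0$ to fix the constants,
\[
  \sum_b g(xv_b)=g(x)+x\sum_b g(v_b),\qquad x\in[0,1].
\]
This is the integrated form of your claim that $\Phi_{\mathbf{v}}=h'$ is constant.

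Unlike the original identity, this one has a free scale variable $x$, so your mollification idea now works. Average it over $\mathbf{v}=(w,1-w)$ against $\rho\in C_c^\infty((0,1))$. This writes $g(x)$ as a linear term plus integrals such as $\int_0^1 g(xw)\rho(w)\,dw=x^{-1}\int_0^1 g(s)\rho(s/x)\,ds$, which are $C^\infty$ on $(0,1]$. After that, your $m$-argument is rigorous as written.

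Alternatively, avoid calculus altogether. With $\mathbf{v}$ uniform on $k$ points, the display gives $G(x/k)=G(x)+G(1/k)$ for $G(t)\eqdef g(t)/t$. Hence $G(rs)=G(r)+G(s)$ for rationals $r,s\in(0,1]$, and continuity yields $G=-c\log$, i.e.\ $g(t)=-ct\log t$.
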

A proof is given by Chakrabarti and Chakrabarty {\cite{chakrabarti2005shannon}}, where additivity under independent attributes corresponds to the authors' Axiom 4.\footnote{The authors' statement of their result also hypothesizes nonnegativity of $\phi$, but this hypothesis is not required by their proof.}

\section{Supplementary Proofs} \label{sec:supplementary-proofs}

\begin{proof}[Proof of \Cref{thm:monotone-under-group-addition}]
    Let $\phi$ be strictly convex. 
    We'll first assume that $\phi$ is nonnegative. 
    Then, for any $I \subset [\ell]$ and $\lambda = \frac{|I|}{\ell}$, we have
    \begin{align*}
      \phi\paren{\frac{1}{\ell} \ve_{[\ell]} } 
      & = \phi\paren{\lambda \frac{1}{|I|} \ve_{I} + (1-\lambda) \frac{1}{\ell - |I|} \ve_{[\ell]\setminus I}} \\
      & < \lambda \phi\paren{\frac{1}{|I|} \ve_{I}} + (1-\lambda) \phi\paren{\frac{1}{\ell - |I|} \ve_{[\ell]\setminus I}} \\
      & \leq \phi\paren{\frac{1}{|I|} \ve_{I}} \;,
    \end{align*}
    where the first inequality is by convexity and the second is by nonnegativity. 

    Now, relax the requirement that $\phi$ is nonnegative. 
    Since the simplex $\simplex{\ell}$ is compact (closed and bounded) and $\phi$ is continuous, $\phi$ attains a minimum on $\simplex{\ell}$, which we denote $\phi_{\min} = \min_{\vy \in \simplex{\ell}} \phi(\vy)$.
    Then, let $\phi'(\vy) = \phi(\vy) - \phi_{\min}$, which is nonnegative and convex. 
    The calculation above now goes through replacing $\phi$ with $\phi'$, proving that $\phi'\paren{\frac{1}{\ell} \ve_{[\ell]}} \leq \phi'\paren{\frac{1}{|I|} \ve_{I}}$  and therefore that $\phi\paren{\frac{1}{\ell} \ve_{[\ell]}} \leq \phi\paren{\frac{1}{|I|} \ve_{I}}$.
  \end{proof}

\begin{proof}[Proof of \Cref{thm:diversity-axioms-1}]
      To demonstrate the uniform maximizer property, for any vector $\vy \in \simplex{\ell}$,
      \begin{align*}
        \ve_{[\ell]}/\ell = \frac{1}{\abs{\cS}}\sum_{\sigma \in \cS} \sigma(\vy) \;,
      \end{align*}
      where $\cS$ is the set of all permutations of $[\ell]$.
      It follows that 
      \begin{align*}
        \psi\paren{\frac{1}{\ell} \ve_{[\ell]}} 
        & = \psi\paren{\frac{1}{\abs{\cS}}\sum_{\sigma \in \cS} \sigma(\vy)} \\
        & < \frac{1}{\abs{\cS}}\sum_{\sigma \in \cS} \psi\paren{\sigma(\vy)} = \psi(\vy) \;,
      \end{align*}
      where the inequality is by an inductive application of strict concavity and the final equality is by symmetry.
  \end{proof}

\begin{proof}[Proof of \Cref{thm:exchanges}]
  The proof is by direct calculation. Let $\pop' = \exc{a}{b}{i}{j}{\epsilon}(\pop)$, and let $\vy_i'$ and $\vy_j'$ be the demographic vectors at locations $i$ and $j$ in $\pop'$. 
  Since the global demographic vector $\bar{\vy}$ is unchanged by the exchange, we have
  \begin{align*}
    I_{\phi}(\pop) - I_{\phi}(\pop') &= \sum_{i = 1}^n \brackets{\mu_i \phi(\vy_i) - \mu_i' \phi(\vy_i')} \\ 
    &= \mu_i\brackets{ \phi(\vy_i) - \phi(\vy_i')} + \mu_j\brackets{ \phi(\vy_j)  - \phi(\vy_j')} \\ 
    &= \mu_i \nabla \phi(\vy_i)^{\top} (\vy_i - \vy_i') \\ 
    & + \mu_j \nabla \phi(\vy_j)^{\top} (\vy_j - \vy_j') \\ 
    &+ o(\norm{\vy_i - \vy_i'}) + o(\norm{\vy_j - \vy_j'})\;. 
  \end{align*}
  since $\vy_k = \vy_k'$ for all $k \notin \{i, j\}$ and $\mu_k = \mu_k'$ for all $k$.
  Using now that $\vy_i - \vy_i' = \lambda_i (\tilde{\ve}_a - \tilde{\ve}_b)$ and $\vy_j - \vy_j' = \lambda_j (\tilde{\ve}_b - \tilde{\ve}_a)$ and ignoring the higher-order terms, we have 
  \begin{align*}
    I_{\phi}(\pop) - I_{\phi}(\pop') &\simeq \mu_i \lambda_i \nabla \phi(\vy_i)^{\top} (\tilde{\ve}_a - \tilde{\ve}_b) + \mu_j \lambda_j \nabla \phi(\vy_j)^{\top} (\tilde{\ve}_b - \tilde{\ve}_a)\;. 
  \end{align*}
  By the separability assumption, we have $\nabla \phi(\vy) = (\phi'(y_1), \ldots, \phi'(y_{\ell}))^{\top}$, and so 
  \begin{align*}
    I_{\phi}(\pop) - I_{\phi}(\pop') &\simeq \mu_i \lambda_i (\phi'(y_{ia}) - \phi'(y_{ib})) + \mu_j \lambda_j (\phi'(y_{jb}) - \phi'(y_{ja})) \\ 
    &= \epsilon (\phi'(y_{ia}) - \phi'(y_{ib}) + \phi'(y_{jb}) - \phi'(y_{ja})) \\ 
    &= \epsilon (\phi'(y_{ia}) - \phi'(y_{ja}) + \phi'(y_{jb}) - \phi'(y_{ib})) \\ 
    &> 0\;.
  \end{align*}
  The last line follows from the assumptions that $y_{ia} < y_{ja}$ and $y_{ib} > y_{jb}$, and the fact that, if $\phi$ is strictly convex, then $\phi'$ is strictly increasing, and so $\phi'(y_{ia}) - \phi'(y_{ja}) > 0$ and $\phi'(y_{jb}) - \phi'(y_{ib}) > 0$.
  Choosing $\epsilon$ sufficiently small to justify the linear approximation completes the proof. 
\end{proof}

\begin{proof}[Derivation of \cref{eq:wealth-jensen-info}]
  We have 
  \begin{align*}
    I_{\phi_{\mathrm{eq}}}(\pop) &= \sum_{i = 1}^n \mu_i \paren{\norm{\mC(\vy_i - \vyref)}^2 - \norm{\mC(\bar{\vy} - \vyref)}^2} \\
    &= \sum_{i = 1}^n \mu_i \paren{\norm{\mC\vy_i}^2 - 2\inner{\mC\vy_i}{\mC\vyref} + \norm{\mC\vyref}^2} \\ 
    &\quad - \sum_{i = 1}^n \mu_i\paren{\norm{\mC\bar{\vy}}^2 + 2\inner{\mC\bar{\vy}}{\mC\vyref} - \norm{\mC\vyref}^2} \\
    &= \sum_{i = 1}^n \mu_i \paren{\norm{\mC\vy_i}^2 - 2\inner{\mC\vy_i}{\mC\vyref} - \norm{\mC\bar{\vy}}^2 + 2\inner{\mC\bar{\vy}}{\mC\vyref}} \\
    &= \sum_{i = 1}^n \mu_i \paren{\norm{\mC\vy_i}^2 - 2\inner{\mC(\vy_i - \bar{\vy})}{\mC\vyref} - \norm{\mC\bar{\vy}}^2 }\\
    &= \paren{\sum_{i = 1}^n \mu_i \norm{\mC\vy_i}^2} - \norm{\mC\bar{\vy}}^2 \;,
  \end{align*}
  since $\sum_{i = 1}^n \mu_i \vy_i = \bar{\vy}$. 
\end{proof}

\end{document}